\newtheorem{proposition}{{Proposition}}
\newtheorem{lemma}{\textbf{Lemma}}
\newtheorem{corollary}{\textbf{Corollary}}
\newtheorem{proof}{\textbf{Proof}}
\def\ScaleIfNeeded{%
\ifdim\Gin@nat@width>\linewidth \linewidth \else \Gin@nat@width
\fi } \makeatother
\begin{document}
%\pagestyle{fancyplain}
%
%\pagestyle{fancy}
%\lhead[]%
 %   {\footnotesize Physical layer security}
%\cfoot{}

\title{\huge{Wireless Powered Dense Cellular Networks:\\How Many Small Cells Do We Need?}}
\author{Lifeng Wang,~\IEEEmembership{Member,~IEEE,}  Kai-Kit Wong,~\IEEEmembership{Fellow,~IEEE}, \\ Robert W. Heath, Jr.,~\IEEEmembership{Fellow,~IEEE}, and Jinhong Yuan,~\IEEEmembership{Fellow,~IEEE}
\thanks{L. Wang, and K.-K. Wong are with the Department of Electronic and Electrical Engineering, University College London, WC1E 7JE, London, UK (E-mail: $\rm\{lifeng.wang, kai$-$\rm kit.wong\}@ucl.ac.uk$).}
\thanks{Robert W. Heath, Jr. is with the Department of Electrical and Computer Engineering, The University of Texas at Austin, Texas, USA (E-mail: $\rm{rheath}@ece.utexas.edu$).}
\thanks{J. Yuan is with School of Electrical Engineering and Telecommunications, University of New South Wales, Sydney, Australia (E-mail: $\rm j.yuan@unsw.edu.au$).}
}

\maketitle
% and Robert Heath, Jr.

\begin{abstract}
This paper focuses on wireless powered 5G dense cellular networks, where base station (BS) delivers energy to user equipment (UE) via the microwave radiation in sub-6 GHz or millimeter wave (mmWave) frequency, and UE uses the harvested energy for uplink information transmission.  By addressing the impacts of employing different number of antennas and bandwidths at lower and higher frequencies,  we evaluate the amount of harvested energy and throughput in such networks. Based on the derived results, we obtain the required small cell density to achieve an expected level of harvested energy or throughput. Also, we obtain that when the ratio of the number of sub-6 GHz BSs to that of the mmWave BSs is lower than a given threshold,  UE harvests more energy from a mmWave BS than a sub-6 GHz BS. We find  how many mmWave small cells are needed to perform better than the sub-6 GHz small cells from the perspectives of harvested energy and throughput. Our results reveal that the amount of harvested energy from the mmWave tier can be comparable to the sub-6 GHz counterpart in the dense scenarios. For the same tier scale,  mmWave tier can achieve higher throughput. Furthermore, the throughput gap between different mmWave frequencies increases with the mmWave BS density.
\end{abstract}

\begin{IEEEkeywords}
 Millimeter wave (mmWave), dense small cells, wireless power transfer, throughput.
\end{IEEEkeywords}

%========================================================================
\section{Introduction}
Wireless power transfer (WPT) is an appealing approach to prolong the lifetime of user equipment (UE), when compared to the traditional energy harvesting sources such as solar and wind that highly depend upon the conditions of the environments. However, the implementation of WPT in conventional cellular networks may be challenging, due to the fact that it cannot support long-distance WPT in the absence of directed power beams, and small cells are not densely deployed~\cite{Kaibin_Huang2015}.  In the fifth generation (5G) cellular networks, technologies such as  millimeter wave (mmWave),  massive multiple-input multiple-output (MIMO) and ultra-dense small cells in the sub-6 GHz and mmWave frequencies will be adopted~\cite{F_Boccardi_5G,Jeffrey_5G,dantong-survey}, which make next-generation networks more suitable for WPT, due to at least the following two key factors:
\begin{itemize}
  \item The very sharp signal beams in large-scale antenna systems such as massive MIMO and mmWave bring large antenna array gains, enabling WPT over long distances.
  \item  Dense small cells will be deployed to provide proximity services, which drastically reduce propagation loss for WPT. In 5G ultra-dense networks (UDNs)~\cite{Ge_X_2016}, the distance between a UE and its serving base station (BS) will be much shorter than ever before.
\end{itemize}
Therefore, 5G networks provide a wealth of opportunities for WPT.

In this paper, we study wireless powered dense cellular networks, in which active UE may select a sub-6 GHz or mmWave BS as dedicated RF energy source, and utilizes its harvested energy for uplink information transmission. We provide
a tractable analytical framework to characterize both the energy
harvesting and throughput performance in such networks. This work will answer how many sub-6 GHz/mmWave small cells are needed in order to achieve some target harvested energy and throughput. In particular, we derive the number of mmWave small cells that is required to achieve better performance than the sub-6 GHz counterpart.

\subsection{Prior Work and Motivation}
Early works have studied the potential of wireless energy harvesting in traditional systems. In \cite{Rui_Zhang_2013}, simultaneous wireless information and power transfer (SWIPT) was investigated in a MIMO wireless broadcasting channel, and two RF energy harvesting designs at the receiver were investigated, namely time switching and power splitting. In \cite{CheDZ14},  each single-antenna UE was considered to harvest ambient RF energy from the surrounding single-antenna access points in a wireless powered network, and a spatial throughput maximization problem was formulated. The work of \cite{Flint_2015} then studied wireless energy harvesting in the sensor networks, where many battery-free sensors are powered by a number of ambient RF energy sources. The power allocation problem in a wireless energy harvesting enabled relay network was considered in ~\cite{zhiguo_Ding_2014}, where the energy constrained relay used power splitting for cooperative transmission. In~\cite{S_A_H_WPT_2015},  $K$-tier uplink cellular networks with ambient RF energy harvesting were considered, and the uplink coverage probability was derived. In device-to-device (D2D) underlaying homogeneous cellular networks, \cite{Sakr_AH_2015} investigated wireless energy harvesting enabled D2D transmissions under different spectrum access policies. Most recently in heterogeneous cellular network (HCN) with energy harvesting based D2Ds, \cite{HH_Yang_2016} studied the D2D relaying in D2D communications. However, the prior work~\cite{Rui_Zhang_2013,CheDZ14,Flint_2015,zhiguo_Ding_2014,S_A_H_WPT_2015,Sakr_AH_2015,HH_Yang_2016} only addresses the effects of  current network features on WPT. Therefore, new research on WPT under emerging 5G network architectures is needed.

The rapid development of 5G technologies has encouraged more research on WPT. In \cite{Gang_yang2015}, the optimal power transfer beamforming was asymptotically derived by considering large number of antennas in a single massive MIMO cell, and the optimal solution for maximizing the throughput under the user fairness criterion was also asymptotically obtained. The work of \cite{Kai_kit_Wong_2015_Mag} provided an overview of SWIPT in massive distributed antenna systems. In \cite{Yongxuzhu2016}, WPT was applied to recharge UEs in massive MIMO aided $K$-tier HCNs, where UEs with large energy storage are connected to their BSs based on two typical user association schemes. Although the work in \cite{Gang_yang2015,Kai_kit_Wong_2015_Mag,Yongxuzhu2016} has studied the impact of massive MIMO antennas on WPT, they only focused on the sub-6 GHz networks.  Moreover, \cite{Gang_yang2015,Kai_kit_Wong_2015_Mag} only considered a single massive MIMO cell case, and \cite{Yongxuzhu2016} did not study the more practical case of UEs with finite battery capacity.

Existing work has studied the coverage and capacity in the mmWave cellular networks without WPT based on field measurements~\cite{TED2013IEEE_Access} or stochastic models~\cite{T_Bai_Mag_2014,Tianyang_arxiv2014,Singh_2015}, where constant transmit power was assumed. However, in the wireless powered mmWave networks, coverage and capacity need to be re-studied, since UE's transmit power becomes random and depends on the harvested energy. The use of mmWave for WPT is promising because of the fact that directed beams are used in mmWave communications and mmWave small cells will also be more densely deployed. Recent efforts on WPT have thus turned to the use of mmWave bands.  In particular, the hardware design of the mmWave rectifier circuit for WPT has been studied in, e.g., \cite{Ladan2014,Nariman2016}. The work of~\cite{Lifeng_wang2015_globcom} studied the wireless powered mmWave cellular network, in which uniform linear array with analog beamforming was implemented for WPT and uplink information transmission. Subsequently in \cite{Robert_heath_mmWave_energy}, downlink SWIPT was investigated in mmWave systems, and the average harvested energy at the  UEs and the downlink coverage probability were evaluated. However, the limitation of \cite{Lifeng_wang2015_globcom} is that it assumed that mmWave UEs have infinite battery capacity such that constant uplink transmit power can be guaranteed,  while \cite{Robert_heath_mmWave_energy} only considered WPT in the downlink and investigated the effects of WPT on downlink information transmission. To the best of our knowledge, wireless powered 5G with both sub-6 GHz and mmWave frequency bands is an open area of research.

In wireless powered cellular networks encompassing sub-6 GHz BSs and mmWave BSs equipped with their respective antenna arrays, WPT can operate at different frequencies, and a UE with finite battery capacity may harvest RF energy in the sub-6 GHz tier or the mmWave tier for accomplishing uplink information transmission. Nevertheless, such networks are not well understood. Moreover, under 5G realistic settings, how many small cells need to be deployed for supporting WPT and information transmission is still unknown.

\subsection{Contributions and Organization}
In this paper, we study wireless powered 5G dense cellular networks, in which sub-6 GHz or mmWave BSs can be selected to power UEs with finite battery capacity.  Our analysis permits to account for the key characteristics of sub-6 GHz and mmWave channels and the effects of different antenna array gains and node densities. In summary, we have made the following major contributions:
\begin{itemize}
  \item We model a wireless powered cellular network consisting of sub-6 GHz BSs and mmWave BSs equipped with antenna arrays, with the help of stochastic geometry. In the energy harvesting phase, each sub-6 GHz BS delivers energy to the nearest sub-6 GHz UE using maximum-ratio transmission (MRT) beamforming, and each mmWave BS delivers mmWave RF energy to the mmWave UE that has the minimum pathloss via narrow beam. In the uplink transmission phase, each active UE uses the harvested energy to transmit information to its associated BS.
  \item We  derive the energy coverage probability in sub-6 GHz and mmWave tiers by considering both the directed transferred power from the associated BS and the ambient RF energy from nearby BSs. We find that when the sub-6 GHz small cell density is lower than a given threshold, a UE harvests more RF energy from a mmWave BS than a sub-6 GHz BS. By considering WPT mode selection, we further derive the probability that a UE selects a sub-6 GHz BS, line-of-sight (LoS) mmWave BS or a non-LoS (NLoS) mmWave BS for WPT.

  \item Also, we derive the throughput in the uplink  sub-6 GHz and mmWave tiers with  different bandwidths. Based on the results, the number of sub-6 GHz/mmWave small cells that are required to achieve a targeted throughput is obtained. We demonstrate that the ratio $\kappa_{\mu\mathrm{UE}}$ of sub-6 GHz BS density to active sub-6 GHz UE density should be greater than a certain threshold, in order to obtain the desired performance. The throughput grows at a higher speed when increasing $\kappa_{\mu\mathrm{UE}}$, compared to increasing the number of BS antennas. Moreover, we calculate how many mmWave small cells are needed such that the achievable throughput in the mmWave tier is larger than that in the sub-6 GHz tier.

  \item Simulation results have confirmed our analysis, and illustrated that the amount of harvested energy is dominated by directed power transfer, compared to the ambient RF energy harvesting. The amount of harvested energy from ambient mmWave RF can still be larger than the sub-6 GHz counterpart in the ultra-dense mmWave tier. When power transfer mode selection is supported, the probability that a UE selects NLoS BS for WPT is negligible, and LoS mmWave WPT is also comparable to the sub-6 GHz counterpart in terms of energy coverage. It is revealed that in the dense scenario where each tier has the same number of BSs and active UEs, a mmWave UE can achieve a higher throughput than the sub-6 GHz counterpart. Furthermore, the performance gap between different mmWave frequencies increases with mmWave BS density due to the fact that more densification gains can be obtained at lower mmWave frequencies.
\end{itemize}

The remainder of the paper is organized as follows. Section II describes the system model including the energy harvesting and information transmission. Section III and Section IV analyze the energy harvesting and throughput in the considered networks, respectively.  After that, we present our simulation results in Section V. Finally, Section VI gives conclusions.

{\em Notations}---$\left(\cdot\right)^H$ denotes the conjugate transpose operator; $\mathcal{CN}\left({\bf{0},\bf{\Lambda}} \right)$ represents the complex Gaussian distribution with zero mean and covariance matrix $\bf{\Lambda}$; $\left\|\cdot\right\|$ is the Euclidean norm; $\mathbb{E}\left[ \cdot \right]$ denotes the expectation operator; ${\bf{0}}_{M \times N}$ is the $M \times N$ zero matrix, and ${\bf{I}}_M$ is the $M \times M$ identity matrix.

\section{System Descriptions}
\subsection{Network Model}
We consider a wireless powered cellular network consisting of the sub-6 GHz and mmWave small cells \footnote{In the future wireless networks such as 5G, both sub-6 GHz and mmWave frequency bands will be applied~\cite{F_Boccardi_5G}. In such networks, sub-6 GHz BSs and mmWave BSs equipped with different antenna arrays coexist, which serve UEs that operate on the sub-6 GHz or mmWave frequency bands, respectively.}, where UEs are powered by the RF energy from the BSs before uplink communication. Each sub-6 GHz BS has an array of $N$ sub-6 GHz antennas, and each mmWave BS is equipped with a large mmWave antenna array. Each sub-6 GHz UE ($\mu$UE) is equipped with a single sub-6 GHz antenna, while each mmWave UE ($\mathrm{mm}$UE) is equipped with a small mmWave antenna array, since it is expected that the shorter mmWave wavelengths would enable UEs to fit more antennas for a fixed antenna aperture. The sub-6 GHz BSs are randomly located following a homogeneous Poisson point process (HPPP) $\Phi_\mu\left(\lambda_\mu\right)$ with the density $\lambda_\mu$, and the mmWave BSs are randomly located following an independent HPPP $\Phi_\mathrm{mm}\left(\lambda_\mathrm{mm}\right)$ with the density $\lambda_\mathrm{mm}$.

When a UE requires the directed power transfer from a dedicated BS, a $\mu$UE will be connected to the sub-6 GHz BS that provides the largest received sub-6 GHz signal power, and accordingly, a $\mathrm{mm}$UE will be connected to the $\mathrm{mm}$Wave BS that provides the largest received $\mathrm{mm}$Wave signal power.

We assume that all sub-6 GHz channels are subject to independent identically distributed (IID) quasi-static Rayleigh block fading, which matches well with practical NLoS measurements~\cite{Emil_2016_small_cell_massive_MIMO,J_Park_2016}. In addition, for large number of antennas, the effect of small-scale fading is considered averaged out~\cite{Emil_2016_small_cell_massive_MIMO,ngo2013energy} and the sub-6 GHz channel power gain is dominated by random pathloss. As a consequence, Rayleigh channel distribution is suitable for modeling sub-6 GHz links when the number of antennas grows large, and has also been used in the literature such as~\cite{Emil_2016_small_cell_massive_MIMO,J_Park_2016} for studying 5G sub-6 GHz scenarios. In the mmWave systems, the high free-space mmWave pathloss leads to very limited spatial selectivity or scattering, and thus the traditional small-scale fading distributions are invalid for modeling the sparse scatting mmWave environments~\cite{Ayach2014}. As suggested in the channel measurement work~\cite{TED2013IEEE_Access},  the effect of small-scale fading in mmWave channels is omitted in this paper by considering highly directional transmissions\footnote{Note that in some existing work such as \cite{Singh_2015,Tianyang_arxiv2014}, it has been mentioned that when assuming that mmWave channel undergoes Rayleigh fading~\cite{Singh_2015} or Nakagami fading~\cite{Tianyang_arxiv2014}, the tractability of analysis can be  improved.}.

\subsection{Energy Harvesting}
\subsubsection{Sub-6 GHz Tier}  In the sub-6 GHz tier, each sub-6 GHz BS adopts MRT beamforming to transfer the energy for recharging its $\mu$UE, to maximize the transferred power. Thanks to the high diffraction and penetration characteristics of sub-6 GHz signals, the blockage effect in the sub-6 GHz channel is less significant than the mmWave counterpart~\cite{J_Park_2016,S_Rangan_2014}. To simplify our analysis, shadow fading~\cite{WeiFeng_2013_JSAC} is omitted in the sub-6 GHz tier of this paper, which is commonly-seen in the literature such as \cite{Singh_2015,J_Park_2016,Robert_heath_mmWave_energy} for tractability.   Hence, for a typical $\mu$UE, say $o$, connected with its serving sub-6 GHz BS, its instantaneous harvested power is written as
\begin{align}\label{mu_MD_receive_power}
P_r^\mu=&\underbrace{\eta_\mu{{{P_\mu}}} {\left\|\mathbf{h}_o\right\|^2}  L\left( {\left| {{X_{{o}}}} \right|}
 \right)}_{\Lambda_{\mu_1}} \nonumber\\
 &+ \underbrace{\eta_\mu \sum\limits_{k  \in {\Phi_\mu\left(\lambda_\mu\right)}\setminus\left\{ o \right\}} {{P_\mu} \left|{\mathbf{h}_{k,o} \frac{\mathbf{h}_k^H}{\left\|\mathbf{h}_k\right\|} }\right|^2 L\left( {\left| {{X_{k ,\mu}}} \right|}\right)}}_{\Lambda_{\mu_2}},
\end{align}
where $\Lambda_{\mu_1}$ is the directed transferred power, and $\Lambda_{\mu_2}$ is the total power from the ambient sub-6 GHz RF, $\eta_\mu$ is the sub-6 GHz RF-to-DC conversion efficiency, $P_\mu$ is  the transmit power of sub-6 GHz BS, $\mathbf{h}_o \sim \mathcal{CN}\left(\mathbf{0},\mathbf{I}_N\right)$ and $\left| {{X_{{o}}}} \right|$ are the small-scale fading channel vector  and distance between  the typical $\mu$UE and its serving BS, respectively,  $L\left(\left|X_{{o}}\right|\right)=\beta_\mu {\left( {{\left|X_{{o}}\right|}}\right)^{ - {\alpha_\mu}}}$ is the pathloss function with the exponent $\alpha_\mu$, where $\beta_\mu$  is a frequency dependent constant value, which is commonly set as ${(\frac{{\text{c}}}{{4\pi {f_c}}})^2}$ with $c=3 \times 10^8 \rm m/s$ and the carrier frequency $f_c$, $\frac{\mathbf{h}_k^H}{\left\|\mathbf{h}_k\right\|}$ is the MRT beamforming vector of the sub-6 GHz BS $k$ (${k  \in {\Phi_\mu}\setminus\left\{ o \right\}}$) with $\mathbf{h}_k \sim \mathcal{CN}\left(\mathbf{0},\mathbf{I}_N\right)$, $\mathbf{h}_{k,o} \sim \mathcal{CN}\left(\mathbf{0},\mathbf{I}_N\right)$ and $\left| {{X_{k,\mu}}} \right|$ are the small-scale fading channel vector and the distance between the typical $\mu$UE and the sub-6 GHz BS $k$ (except the serving sub-6 GHz BS), respectively.

\subsubsection{MmWave Tier} In the mmWave tier,  a sectored model is applied to analyze the beam pattern~\cite{A_M_Hunter_2008,Tianyang_arxiv2014,Singh_2015}, i.e.,  the effective antenna gain for a mmWave BS $\ell$ ($\ell \in \Phi_\mathrm{mm}\setminus\left\{ o \right\}$) seen by the typical $\mathrm{mm}$UE $o$  is expressed as
\begin{align}\label{array_gain_pattern}
{G_\ell} = \left\{ \begin{array}{ll}
{\mathrm{M_B}\mathrm{M_D}}{\rm{,}} &{\Pr _{\mathrm{M_\mathrm{B} M_\mathrm{D}}}}{\rm{ = }}{ {\frac{\theta_\mathrm{B}\theta_\mathrm{D}}{4\pi^2}}},\\
{\mathrm{M_B}}{\mathrm{m_D}}{\rm{,}} &{\Pr _{\mathrm{M_B m_D}}}{\rm{ = }}\frac{{\theta_\mathrm{B} \left( {2\pi  - \theta_\mathrm{D}} \right)}}{{{{4\pi^2}}}},\\
{\mathrm{m_B}}{\mathrm{M_D}}{\rm{,}} &{\Pr _{\mathrm{m_B M_D}}}{\rm{ = }}\frac{{ \left( {2\pi  - \theta_\mathrm{B}} \right)\theta_\mathrm{D}}}{{{{4\pi^2}}}},\\
{\mathrm{m_B} \mathrm{m_D}}, &{\Pr _{\mathrm{m_B m_D}}} = \frac{\left( {{2\pi- \theta_\mathrm{B}}}\right)\left({2\pi-\theta_\mathrm{D}} \right)}{{4\pi^2}},
\end{array} \right.
\end{align}
where $\mathrm{M}_\mathrm{B}$, $\mathrm{m}_\mathrm{B}$, and $\theta_\mathrm{B}$ are the main lobe gain, side lobe gain, and half power beamwidth of the mmWave BS antenna, respectively, and  $\mathrm{M}_\mathrm{D}$, $\mathrm{m}_\mathrm{D}$, and $\theta_\mathrm{D}$ are the main lobe gain, side lobe gain, and half power beamwidth of the $\mathrm{mm}$UE antenna, respectively. We assume that the maximum array gain ${\mathrm{M_B}\mathrm{M_D}}$ can be obtained for the typical BS and its $\mathrm{mm}$UE.

Recognizing that mmWave communication is sensitive to the blockage in the outdoor scenario, a $\mathrm{mm}$UE is associated with either a LoS mmWave BS or a NLoS mmWave BS.  We denote $f_\mathrm{Pr}\left(R\right)$ as the probability that a link at a distance $R$ is LoS, so that the NLoS probability of a link is $1-f_\mathrm{Pr}\left(R\right)$. We consider two different pathloss laws: $L\left(R\right)=\beta^{\rm{mm}}_\mathrm{LoS} R^{-\alpha^{\rm{mm}}_\mathrm{LoS}}$ is the pathloss function for LoS channel and $L\left(R\right)=\beta^{\rm{mm}}_\mathrm{NLoS} R^{-\alpha^{\rm{mm}}_\mathrm{NLoS}}$ is the pathloss function for NLoS channel, where $\beta^{\rm{mm}}_\mathrm{LoS}$, $\beta^{\rm{mm}}_\mathrm{NLoS}$ are the frequency dependent constant values and $\alpha^{\rm{mm}}_\mathrm{LoS}$, $\alpha^{\rm{mm}}_\mathrm{NLoS}$ are the pathloss exponents.

 For a typical $\mathrm{mm}$UE $o$ connected with its serving mmWave BS, its instantaneous harvested power is written as
\begin{align}\label{mm_MD_receive_power}
P_r^\mathrm{mm}=&\underbrace{\eta_\mathrm{mm}{{{P_\mathrm{mm}}}} \mathrm{M_\mathrm{B}}\mathrm{M_\mathrm{D}} L\left( {\left| {{Y_{{o}}}} \right|}
 \right)}_{\Lambda_{\mathrm{mm}_1}} \nonumber\\
 &\quad+\underbrace{\eta_\mathrm{mm}\sum\limits_{\ell  \in {\Phi_\mathrm{mm}\left(\lambda_\mathrm{mm}\right)}\setminus\left\{ o \right\}} {{P_\mathrm{mm}} {G_\ell} L\left( {\left| {{Y_{\ell,\mathrm{mm}}}} \right|}\right)}}_{\Lambda_{\mathrm{mm}_2}},
\end{align}
where $\Lambda_{\mathrm{mm}_1}$ is the directed transferred power, and  $\Lambda_{\mathrm{mm}_2}$ is the total power from the ambient mmWave RF, $\eta_\mathrm{mm}$ is the mmWave RF-to-DC conversion efficiency, $P_\mathrm{mm}$ is the mmWave BS transmit power, $\left| {{Y_{{o}}}} \right|$ is the distance between the typical $\mathrm{mm}$UE and its serving mmWave BS, and $\left| {{Y_{\ell,\mathrm{mm}}}} \right|$ is the distance between the typical $\mathrm{mm}$UE and the mmWave BS $\ell \in {\Phi_\mathrm{mm}}\setminus\left\{ o \right\}$ (except the serving mmWave BS).

\subsection{Uplink Transmission}
\subsubsection{Sub-6 GHz Tier} In the sub-6 GHz tier, maximum-ratio combining (MRC) is utilized for maximizing the received signal power at the sub-6 GHz BS. The receive signal-to-interference-plus-noise ratio (SINR) of a typical sub-6 GHz BS from its intended $\mu$UE is therefore given by
\begin{align}\label{muWave_SINR_original}
\mathrm{SINR}_{\mu}&=\frac{S_{\left(\lambda_\mu\right)}^\mu}{I_{\left(\widetilde{\lambda}_{\mu\mathrm{UE}}\right)}^\mu+{\sigma^2}},
\end{align}
where
 \begin{equation}\label{SINR_part1}
 \left\{\begin{aligned}
S_{\left(\lambda_\mu\right)}^\mu &=P_{\mathrm{UE}_o}^{\mu} \left\|\mathbf{g}_o\right\|^2 L\left({\left| {{X_{{o}}}} \right|}\right),\\
 I_{\left(\widetilde{\lambda}_{\mu\mathrm{UE}}\right)}^\mu &=\sum\limits_{i  \in {\widetilde{\Phi}_{\mu\mathrm{UE}}\left(\widetilde{\lambda}_{\mu\mathrm{UE}}\right)}\setminus\left\{ o \right\}} {{P_{\mathrm{UE}_i}^{\mu}} \left|{\frac{\mathbf{g}_o^H}{\left\|\mathbf{g}_o\right\|} \mathbf{g}_i }\right|^2 L\left( {\left| {{X_{i ,\mu}}} \right|}\right)}.
\end{aligned}\right.
\end{equation}
In \eqref{SINR_part1}, $P_{\mathrm{UE}_i}^\mu$ denotes the $i$-th $\mu$UE's transmit power, $\mathbf{g}_o \sim \mathcal{CN}\left(\mathbf{0},\mathbf{I}_N\right)$ is the small-scale fading channel vector between the typical sub-6 GHz BS  and its intended $\mu$UE, $\widetilde{\Phi}_{\mu\mathrm{UE}}(\widetilde{\lambda}_{\mu\mathrm{UE}})$ is the point process for the active $\mu$UEs with density $\widetilde{\lambda}_{\mu\mathrm{UE}}$, $\mathbf{g}_i \sim \mathcal{CN}\left(\mathbf{0},\mathbf{I}_N\right)$ and $\left| {{X_{i ,\mu}}} \right|$ are the small-scale fading channel vector and the distance between the typical sub-6 GHz BS and interfering $\mu$UE $i$, respectively, and $\sigma^2$ is the noise power.

\subsubsection{MmWave Tier} In the mmWave tier, we only consider the LoS uplink transmissions, since each mmUE uses lower transmit power from  limited harvested energy and the harvested energy from NLoS link is much lower, which means that NLoS uplink will be blocked.  According to the LoS mmWave model in~\cite{Tianyang_arxiv2014,J_Park_2016},  the receive SINR of a typical mmWave BS from its intended mmUE is given by
\begin{align}\label{mmWave_SINR_original}
\mathrm{SINR}_{\mathrm{mm}}=\frac{S_{\left(\lambda_\mathrm{mm}\right)}^{\mathrm{mm}}}{
I_{\left(\widetilde{\lambda}_{\mathrm{mmUE}}\right)}^{\mathrm{mm}}+{\sigma^2}},
\end{align}
where
 \begin{equation}\label{SINR_part2_mmwave}
 \left\{\begin{aligned}
S_{\left(\lambda_\mathrm{mm}\right)}^{\mathrm{mm}} &=P_{\mathrm{UE}_o}^{\mathrm{mm}} \mathrm{M}_D \mathrm{M}_B L\left({\left| {{Y_{{o}}}} \right|}\right)\mathbf{1}\left(\left| {{Y_{{o}}}} \right|< R_\mathrm{LoS}\right),\\
 I_{\left(\widetilde{\lambda}_{\mathrm{mmUE}}\right)}^{\mathrm{mm}}&=\sum\limits_{j  \in {\widetilde{\Phi}_{\mathrm{mmUE}}}\left( \widetilde{\lambda}_{\mathrm{mmUE}}\right)\setminus\left\{ o \right\}} P_{\mathrm{UE}_j}^{\mathrm{mm}} \widetilde{G}_j L_\mathbf{1}\left( {\left| {{Y_{j ,\mu}}} \right|}\right)
\end{aligned}\right.
\end{equation}
with $L_\mathbf{1}\left( {\left| {{Y_{j ,\mu}}} \right|}\right)=L\left( {\left| {{Y_{j ,\mu}}} \right|}\right)\mathbf{1}\left(\left| {{Y_{j ,\mu}}} \right|< R_\mathrm{LoS}\right)$. Here, $P_{\mathrm{UE}_o}^{\mathrm{mm}}$ is the typical mmUE's transmit power and $P_{\mathrm{UE}_j}^{\mathrm{mm}}$ is the $j$-th interfering mmUE's transmit power, $\mathbf{1}\left(\cdot\right)$ is the indicator function that returns one if the condition is satisfied and zero otherwise, $R_\mathrm{LoS}$ denotes the maximum distance that LoS can be guaranteed~\cite{Tianyang_arxiv2014,J_Park_2016} (i.e., $f_\mathrm{Pr}\left(R\right)=1$ as $R \leq R_\mathrm{LoS}$ and otherwise $f_\mathrm{Pr}\left(R\right)=0$.), $\widetilde{\Phi}_{\mathrm{mmUE}}(\widetilde{\lambda}_{\mathrm{mmUE}})$ is the point process corresponding to the active mmUEs with the density $\widetilde{\lambda}_{\mathrm{mmUE}}$, $\widetilde{G}_j$ is the effective antenna gain for an interfering mmUE $j$ seen by the typical mmWave BS, which follows the distribution given in \eqref{array_gain_pattern}, and $\left| {{Y_{j ,\mu}}} \right|$ is the distance between the interfering mmUE $j$ and the typical mmWave BS.

\section{Energy Harvesting}
We evaluate the wireless energy harvesting in the sub-6 GHz and mmWave cellular networks. To gain comprehensive understanding, we respectively examine the directed transferred power from the associated BS  and the ambient RF harvested power from nearby BSs that a UE can obtain, which allows us to quantify the harvested energy from the dedicated RF and ambient RF.  Note that the minimum amount of  energy is required to activate the harvesting circuit, which depends on the specific circuit designs based on CMOS, HSMS, SMS and etc. As surveyed in \cite{X_Lu_Survey_2015}, the minimum RF input power for CMOS technology can be as low as $-27$ dBm  based on the prior circuit work from 2006 to 2014. Therefore, the RF energy harvesting sensitivity level is much lower and can be omitted~\cite{Kaibin2014,CheDZ14}.
In fact, since we consider 5G dense cellular networks, where dense BSs  act as dedicated RF energy sources to power UEs via narrow beams, the amount of the received energy at a UE will be much larger than the RF energy harvesting sensitivity level~\cite{X_Lu_Survey_2015}.

\subsection{Directed Transferred Power}
\subsubsection{Sub-6 GHz Tier} In the sub-6 GHz tier, given a power threshold $P_\mathrm{th}$, the coverage probability that the directed transferred power is larger than $P_\mathrm{th}$  can be derived as
\begin{align}\label{DWPT_u_1}
\overline{\Psi}_\mathrm{D}^{\mu}\left(P_\mathrm{th}\right)=& \sum\limits_{n = 0}^{N - 1}  {\left({\frac{P_\mathrm{th}}{\eta_\mu P_\mu \beta_\mu}}\right)^n}\frac{2 \pi \lambda_\mu}{n!} \nonumber\\
&\qquad \times \int_0^\infty {{{{e^{ - \frac{r^{\alpha_\mu}P_\mathrm{th}}{\eta_\mu P_\mu \beta_\mu}-\pi \lambda_\mu r^2}}}}} r^{\alpha_\mu n+1}  dr.
\end{align}
\begin{proof}
Based on \eqref{mu_MD_receive_power},  $\overline{\Psi}_\mathrm{D}^{\mu}\left(P_\mathrm{th}\right)$ is calculated as
\begin{align}\label{pf1_1_1}
&\overline{\Psi}_\mathrm{D}^{\mu}\left(P_\mathrm{th}\right)=\Pr\left(\Lambda_{\mu_1} > P_\mathrm{th} \right) \nonumber\\
%&=\Pr\left(\eta_\mu{{{P_\mu}}}{\left\|\mathbf{h}_o\right\|^2}L\left( {\left| {{X_{{o}}}} \right|} \right) > P_\mathrm{th} \right)\nonumber\\
&=\int_0^\infty \Pr\left(\eta_\mu{{{P_\mu}}}{\left\|\mathbf{h}_o\right\|^2} \beta_\mu r^{-\alpha_\mu} > P_\mathrm{th} \right) f_{\left| {{X_{{o}}}} \right|} \left( r \right) dr,
\end{align}
where $f_{\left| {{X_{{o}}}} \right|} \left( r \right)$ is the probability density function (PDF) of {the distance between a sub-6 GHz UE and its nearest sub-6 GHz BS}, which is given by
\begin{align}\label{pf1_2}
f_{\left| {{X_{{o}}}} \right|} \left( r \right)= 2 \pi \lambda_\mu r \exp\left(-\pi \lambda_\mu r^2\right).
\end{align}
Considering that $\left\|\mathbf{h}_o\right\|^2 \sim \Gamma\left(N,1\right)$, we can further calculate \eqref{pf1_1_1} as
\begin{align}\label{pf1_3}
&\overline{\Psi}_\mathrm{D}^{\mu}\left(P_\mathrm{th}\right)= \sum\limits_{n = 0}^{N - 1}
 \int_{0}^\infty {\frac{{{e^{ - \frac{r^{\alpha_\mu}P_\mathrm{th}}{\eta_\mu P_\mu \beta_\mu} }}}}{{n!}}} {\left({\frac{r^{\alpha_\mu}P_\mathrm{th}}{\eta_\mu P_\mu \beta_\mu}}\right)^n}   f_{\left| {{X_{{o}}}} \right|} \left( r \right) dr. %\nonumber\\
%&= \sum\limits_{n = 0}^{N - 1}  {\left({\frac{P_\mathrm{th}}{\eta_\mu P_\mu \beta_\mu}}\right)^n}\frac{2 \pi \lambda_\mu}{n!} \int_0^\infty {{{{e^{ - \frac{r^{\alpha_\mu}P_\mathrm{th}}{\eta_\mu P_\mu \beta_\mu}-\pi \lambda_\mu r^2}}}}} r^{\alpha_\mu n+1}  dr.
\end{align}
Substituting \eqref{pf1_2} into \eqref{pf1_3}, we obtain \eqref{DWPT_u_1}.
\end{proof}
Based on \eqref{DWPT_u_1}, the sufficient condition for $\overline{\Psi}_\mathrm{D}^{\mu}\left(P_\mathrm{th}\right)>\varepsilon$ ($0<\varepsilon < 1$) for a given $P_\mathrm{th}$, is given by the following corollary.

\begin{corollary}
The probability of the achievable directed transferred power $P_\mathrm{th}$ is larger than $\varepsilon$, if the sub-6 GHz BS density  satisfies
\begin{align}\label{col_1_11}
\lambda_\mu > \omega_o \left(\frac{\eta_\mu P_\mu \beta_\mu N}{P_\mathrm{th}}\right)^{-2/\alpha_\mu}
\end{align}
with large $N$, where $\omega_o=\ln\left(\frac{1}{1-\varepsilon}\right)/\pi$.
\end{corollary}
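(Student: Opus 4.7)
The plan is to start from the exact expression \eqref{DWPT_u_1} for $\overline{\Psi}_\mathrm{D}^{\mu}(P_\mathrm{th})$, invoke a large-$N$ concentration argument on the gamma random variable $\|\mathbf{h}_o\|^2$, and then invert the resulting elementary inequality in $\lambda_\mu$. First I would rewrite \eqref{DWPT_u_1} in the equivalent conditional-probability form
\begin{equation*}
\overline{\Psi}_\mathrm{D}^{\mu}(P_\mathrm{th}) = \int_0^\infty \Pr\!\left(\|\mathbf{h}_o\|^2 > \tfrac{r^{\alpha_\mu} P_\mathrm{th}}{\eta_\mu P_\mu \beta_\mu}\right) f_{|X_o|}(r)\,dr,
\end{equation*}
where the inner tail comes from a $\Gamma(N,1)$ variate and $f_{|X_o|}$ is given by \eqref{pf1_2}.

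Next I would use the fact that for large $N$, $\|\mathbf{h}_o\|^2/N \to 1$ in probability (law of large numbers applied to the i.i.d.\ exponential summands), so that $\Pr(\|\mathbf{h}_o\|^2 > t) \to \mathbf{1}(t < N)$. Substituting this asymptotic form, the inner probability becomes the indicator that $r < r_o := \bigl(\eta_\mu P_\mu \beta_\mu N / P_\mathrm{th}\bigr)^{1/\alpha_\mu}$, which reduces the integral to a void-probability computation:
\begin{equation*}
\overline{\Psi}_\mathrm{D}^{\mu}(P_\mathrm{th}) \;\approx\; \int_0^{r_o} 2\pi\lambda_\mu r\, e^{-\pi \lambda_\mu r^2}\,dr \;=\; 1 - e^{-\pi \lambda_\mu r_o^2}.
\end{equation*}

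Finally, enforcing $1 - e^{-\pi \lambda_\mu r_o^2} > \varepsilon$ and solving for $\lambda_\mu$ gives
\begin{equation*}
\lambda_\mu \;>\; \frac{\ln\!\bigl(1/(1-\varepsilon)\bigr)}{\pi}\, r_o^{-2} \;=\; \omega_o \left(\frac{\eta_\mu P_\mu \beta_\mu N}{P_\mathrm{th}}\right)^{-2/\alpha_\mu},
\end{equation*}
which is precisely \eqref{col_1_11}. The main obstacle is the large-$N$ step: the paper treats it heuristically, but to make the argument clean one would invoke a Chernoff/Bernstein bound to show that the tail $\Pr(\|\mathbf{h}_o\|^2 > t)$ concentrates as a sharp indicator around $t = N$, and use dominated convergence to pass the limit under the integral over $r$. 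Everything else is either the exact formula already derived or the elementary algebra of inverting an exponential, so the concentration justification is the only non-mechanical piece.
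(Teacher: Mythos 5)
Your proposal is correct and follows essentially the same route as the paper: replace $\|\mathbf{h}_o\|^2$ by $N$ for large $N$, reduce the coverage probability to the void probability $1-e^{-\pi\lambda_\mu(\eta_\mu P_\mu\beta_\mu N/P_\mathrm{th})^{2/\alpha_\mu}}$, and invert the inequality in $\lambda_\mu$. Your added remark on making the concentration step rigorous via a Chernoff bound and dominated convergence is a reasonable refinement of what the paper treats heuristically, but it does not change the argument; note only that your local symbol $r_o$ clashes with the reference distance $r_o$ used elsewhere in the paper.
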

%When the number of sub-6 GHz BS antennas grows large, the above can be approximated in a closed-form as
%\begin{align}\label{DWPT_u_1_asym_N}
%\widetilde{\Psi}_\mathrm{D}^{\mu}\left(P_\mathrm{th}\right)=1-\exp\left(-\pi \lambda_\mu \left(\frac{\eta_\mu P_\mu \beta_\mu N}{P_\mathrm{th}}\right)^{2/\alpha_\mu}\right).
%\end{align}
\begin{proof}
With large $N$, $\left\|\mathbf{h}_o\right\|^2 \approx N$~\cite{Lifeng_Massive_MIMO}, and \eqref{pf1_1_1} can be approximated as
\begin{align}\label{mu_BS_approx}
\widetilde{\Psi}_\mathrm{D}^{\mu}\left(P_\mathrm{th}\right)&= \int_0^{\left(\frac{\eta_\mu P_\mu \beta_\mu N}{P_\mathrm{th}}\right)^{1/\alpha_\mu}} f_{\left| {{X_{{o}}}} \right|} \left( r \right) dr \nonumber\\
&=1-\exp\left(-\pi \lambda_\mu \left(\frac{\eta_\mu P_\mu \beta_\mu N}{P_\mathrm{th}}\right)^{2/\alpha_\mu}\right).
\end{align}
Letting $\widetilde{\Psi}_\mathrm{D}^{\mu}\left(P_\mathrm{th}\right) > \varepsilon$, after manipulations, gives \eqref{col_1_11}.
\end{proof}

It is indicated from \textbf{Corollary 1} that the number of sub-6 GHz BSs needs to be large enough for WPT. Moreover, the required BS density decreases when adding more BS antennas, due to the fact that the decreased densification gains can be redeemed by obtaining more antenna gains.

%{\textbf{Remark 1:} } From \eqref{mu_BS_approx}, we find that the the directed transferred energy increases with the sub-6 GHz BS antennas and density.

\subsubsection{MmWave Tier} In the mmWave tier, the coverage probability that the directed transferred power is larger than a threshold $P_\mathrm{th}$ can be derived as
 \begin{align}\label{mmWave_pf1}
&\overline{\Psi}_\mathrm{D}^{\mathrm{mm}}\left(P_\mathrm{th}\right)=2\pi \lambda_\mathrm{mm}\bigg[ \nonumber\\
&\quad \int_0^{\varsigma\left(\beta^{\rm{mm}}_\mathrm{LoS},\alpha^{\rm{mm}}_\mathrm{LoS}\right)} y {f_{\Pr }}
\left( y\right){e^{ - 2\pi \lambda_\mathrm{mm} \left[ {\Theta \left( y \right) + \Xi \left( {{\Im_\mathrm{LoS}}
} \right)} \right]}} dy + \nonumber\\
&~~\int_0^{\varsigma\left(\beta^{\rm{mm}}_\mathrm{NLoS},\alpha^{\rm{mm}}_\mathrm{NLoS}\right)} y{\left(1-f_{\Pr }\left( y \right)\right)}
{e^{ - 2\pi \lambda_\mathrm{mm} \left[ \Theta \left(\Im_\mathrm{NLoS}\right)+ {\Xi \left( y\right)} \right]}} dy\bigg],
\end{align}
where $\varsigma\left(y_1,y_2\right)=\left(\frac{\eta_\mathrm{mm}{{{P_\mathrm{mm}}}} \mathrm{M_\mathrm{B}}\mathrm{M_D}}{P_\mathrm{th}}y_1\right)^{1/y_2}$, $\Theta \left( y \right) = \int_0^y {t f_{\Pr}\left( t \right)dt}$,
$\Xi \left( y\right) = \int_0^y {\left( {1 - {f_{\Pr }}\left( t \right)} \right)tdt}$, $\Im_\mathrm{LoS}=
\left( {\frac{{{\beta^{\rm{mm}}_\mathrm{NLoS}}}}{{{\beta^{\rm{mm}}_\mathrm{LoS}}}}} \right)^{1/\alpha^{\rm{mm}}_\mathrm{NLoS}}
y^{{\alpha^{\rm{mm}}_\mathrm{LoS}}/{\alpha^{\rm{mm}}_\mathrm{NLoS}}}$, and $\Im_\mathrm{NLoS}=
\left( {\frac{{{\beta^{\rm{mm}}_\mathrm{LoS}}}}{{{\beta^{\rm{mm}}_\mathrm{NLoS}}}}} \right)^{1/\alpha^{\rm{mm}}_\mathrm{LoS}}
y^{{\alpha^{\rm{mm}}_\mathrm{NLoS}}/{\alpha^{\rm{mm}}_\mathrm{LoS}}}$.

\begin{proof}
Based on \eqref{mm_MD_receive_power}, $\overline{\Psi}_\mathrm{D}^{\mathrm{mm}}\left(P_\mathrm{th}\right)$ is calculated as
\begin{align}\label{pf1_1}
\overline{\Psi}_\mathrm{D}^{\mathrm{mm}}\left(P_\mathrm{th}\right)&=\Pr\left(\Lambda_{\mathrm{mm}_1}> P_\mathrm{th}  \right) \nonumber\\
%&=\Pr\left(\eta_\mathrm{mm}{{{P_\mathrm{mm}}}} \mathrm{M_B}\mathrm{M_D} L\left( {\left| {{Y_{{o}}}} \right|}
% \right)> P_\mathrm{th}  \right) \nonumber\\
&=\Upsilon_\mathrm{LoS} \int_0^{\varsigma\left(\beta^{\rm{mm}}_\mathrm{LoS},\alpha^{\rm{mm}}_\mathrm{LoS}\right)} f_R^{\mathrm{LoS}}\left( y \right) dy  \nonumber\\
&+   \Upsilon_\mathrm{NLoS} \int_0^{\varsigma\left(\beta^{\rm{mm}}_\mathrm{NLoS},\alpha^{\rm{mm}}_\mathrm{NLoS}\right)}  f_R^{\mathrm{NLoS}}\left( y \right) dy,
\end{align}
where  $\Upsilon_\mathrm{LoS}$ represents the probability that the typical user is connected to a LoS BS and $\Upsilon_\mathrm{NLoS}=1-\Upsilon_\mathrm{LoS}$ represents the probability that the typical user is connected to a NLoS BS,  $f_R^{\mathrm{LoS}}\left( x \right)$ is the conditioned PDF of the distance  between the typical $\mathrm{mm}$UE and its serving LoS BS, and $f_R^{\mathrm{NLoS}}\left( x \right)$ is the conditioned PDF of the distance  between the typical $\mathrm{mm}$UE and its serving NLoS BS, which are given by~\cite{Tianyang_arxiv2014}
\begin{align}\label{LoS_Distance_conditionPDF}
f_R^{\mathrm{LoS}}\left( y \right) = \frac{{2\pi \lambda_\mathrm{mm} }}{{{\Upsilon_\mathrm{LoS}}}}y {f_{\Pr }}
\left( y\right){e^{ - 2\pi \lambda_\mathrm{mm} \left[ {\Theta \left( y \right) + \Xi \left( {{\Im_\mathrm{LoS}}
} \right)} \right]}},
\end{align}
and
\begin{align}\label{NLoS_Distance_conditionPDF}
 f_R^{\mathrm{NLoS}}\left( y\right) = \frac{{2\pi \lambda_\mathrm{mm} }}{{{\Upsilon_\mathrm{NLoS}}}}y{\left(1-f_{\Pr }\left( y \right)\right)}
{e^{ - 2\pi \lambda_\mathrm{mm} \left[ \Theta \left(\Im_\mathrm{NLoS}\right)+ {\Xi \left( y\right)} \right]}},
\end{align}
respectively. Substituting \eqref{LoS_Distance_conditionPDF} and \eqref{NLoS_Distance_conditionPDF} into \eqref{pf1_1} yields \eqref{mmWave_pf1}.
\end{proof}

%According to \textbf{Corollary 1} and \eqref{mmWave_pf1}, we have the following important insight.
\begin{corollary}
Given a threshold $P_\mathrm{th}$, the coverage probability of the directed power transfer in the mmWave tier is larger than that in the sub-6 GHz tier under the following condition:
 \begin{align}\label{corollary2}
 \lambda_\mu < \frac{\ln\left(1-\overline{\Psi}_\mathrm{D}^{\mathrm{mm}}\left(P_\mathrm{th}\right)\right)^{-1}}{\pi  \left(\frac{\eta_\mu P_\mu \beta N}{P_\mathrm{th}}\right)^{2/\alpha_\mu}}.
 \end{align}
 In particular, when only LoS mmWave links with $f_\mathrm{Pr}\left(R\right)=\mathbf{1}\left(R \leq R_\mathrm{LoS}\right)$ are used for power transfer, the above condition is transformed as
 \begin{align}\label{col_2_2}
\lambda_\mu<\lambda_\mathrm{mm} {\xi^{2}}{\left(\frac{\eta_\mu P_\mu \beta N}{P_\mathrm{th}}\right)^{-2/\alpha_\mu}},
 \end{align}
where $\xi=\min\left\{R_\mathrm{LoS},\varsigma\left(\beta^{\rm{mm}}_\mathrm{LoS},\alpha^{\rm{mm}}_\mathrm{LoS}\right)\right\}$.
\end{corollary}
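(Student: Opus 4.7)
The plan is to leverage the large-$N$ approximation from \textbf{Corollary 1} for the sub-6 GHz directed-power coverage so that the inequality $\overline{\Psi}_\mathrm{D}^{\mathrm{mm}}(P_\mathrm{th}) > \overline{\Psi}_\mathrm{D}^{\mu}(P_\mathrm{th})$ becomes algebraically invertible in $\lambda_\mu$. Starting from the closed form $\widetilde{\Psi}_\mathrm{D}^{\mu}(P_\mathrm{th}) = 1 - \exp\!\left(-\pi\lambda_\mu(\eta_\mu P_\mu \beta_\mu N/P_\mathrm{th})^{2/\alpha_\mu}\right)$ obtained in \eqref{mu_BS_approx}, I would impose $\widetilde{\Psi}_\mathrm{D}^{\mu}(P_\mathrm{th}) < \overline{\Psi}_\mathrm{D}^{\mathrm{mm}}(P_\mathrm{th})$, rearrange so that the exponential is isolated, take logarithms (noting that $\ln$ is monotone and the argument is in $(0,1)$, which flips the inequality in the expected way), and divide by the pathloss-dependent constant. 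This directly produces the bound \eqref{corollary2}, so no additional estimates are needed for the general part.

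For the LoS-only specialization, I would plug $f_\mathrm{Pr}(R) = \mathbf{1}(R \leq R_\mathrm{LoS})$ into the two integrals appearing in \eqref{mmWave_pf1}. In the NLoS integral the factor $1 - f_\mathrm{Pr}(y)$ is identically zero on $[0, R_\mathrm{LoS}]$, while outside that range $\Im_\mathrm{NLoS}$ would drive the exponent, but since we assume only LoS links carry energy the NLoS contribution is discarded. In the LoS integral the weight $f_\mathrm{Pr}(y)=1$ and $\Theta(y) = y^2/2$ on $[0,R_\mathrm{LoS}]$, and the $\Xi(\Im_\mathrm{LoS})$ term is zero because the integrand defining $\Xi$ vanishes on $[0,R_\mathrm{LoS}]$. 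The effective upper limit becomes $\xi = \min\{R_\mathrm{LoS},\varsigma(\beta^{\rm{mm}}_\mathrm{LoS},\alpha^{\rm{mm}}_\mathrm{LoS})\}$, the former cap enforcing LoS and the latter cap enforcing $\Lambda_{\mathrm{mm}_1}>P_\mathrm{th}$.

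Carrying out the elementary integral $2\pi\lambda_\mathrm{mm}\int_0^{\xi} y\,e^{-\pi\lambda_\mathrm{mm} y^2}\,dy$ gives $\overline{\Psi}_\mathrm{D}^{\mathrm{mm}}(P_\mathrm{th}) = 1 - \exp(-\pi\lambda_\mathrm{mm}\xi^2)$. Substituting this back into \eqref{corollary2} collapses $\ln\!\left((1-\overline{\Psi}_\mathrm{D}^{\mathrm{mm}}(P_\mathrm{th}))^{-1}\right)$ to $\pi\lambda_\mathrm{mm}\xi^2$, which cancels the $\pi$ in the denominator and yields the clean form \eqref{col_2_2}.

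The main obstacle, such as it is, lies in handling the two coupled terms $\Theta$ and $\Xi$ inside the mmWave exponent under $f_\mathrm{Pr}(R)=\mathbf{1}(R\leq R_\mathrm{LoS})$: one must verify that $\Xi(\Im_\mathrm{LoS})$ indeed drops out rather than activating through the case $\Im_\mathrm{LoS}>R_\mathrm{LoS}$, and that the truncation of the LoS integral at $\varsigma(\beta^{\rm{mm}}_\mathrm{LoS},\alpha^{\rm{mm}}_\mathrm{LoS})$ combines consistently with the hard LoS cutoff at $R_\mathrm{LoS}$. Once these case distinctions are resolved by the $\min$ defining $\xi$, the remainder is bookkeeping.
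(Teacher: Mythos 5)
Your treatment of the general condition \eqref{corollary2} is exactly the paper's: invoke the large-$N$ closed form $\widetilde{\Psi}_\mathrm{D}^{\mu}(P_\mathrm{th})=1-\exp\bigl(-\pi\lambda_\mu(\eta_\mu P_\mu\beta_\mu N/P_\mathrm{th})^{2/\alpha_\mu}\bigr)$ from \eqref{mu_BS_approx}, impose $\widetilde{\Psi}_\mathrm{D}^{\mu}<\overline{\Psi}_\mathrm{D}^{\mathrm{mm}}$, and invert. For the LoS specialization, however, you take a genuinely different route: you specialize the general expression \eqref{mmWave_pf1} by substituting $f_\mathrm{Pr}(R)=\mathbf{1}(R\leq R_\mathrm{LoS})$ into the $\Theta$ and $\Xi$ integrals, whereas the paper bypasses \eqref{mmWave_pf1} entirely and computes $\Pr\bigl(\eta_\mathrm{mm}P_\mathrm{mm}\mathrm{M_B}\mathrm{M_D}L(|Y_o|)\mathbf{1}(|Y_o|<R_\mathrm{LoS})>P_\mathrm{th}\bigr)$ directly as an integral of the nearest-BS distance density $2\pi\lambda_\mathrm{mm}r e^{-\pi\lambda_\mathrm{mm}r^2}$ over $[0,\xi]$, which immediately gives $1-\exp(-\pi\lambda_\mathrm{mm}\xi^2)$. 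The paper's route is shorter and avoids the case analysis you correctly identify as the main obstacle.

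On that obstacle: the justification you sketch for discarding $\Xi(\Im_\mathrm{LoS})$ --- that ``the integrand defining $\Xi$ vanishes on $[0,R_\mathrm{LoS}]$'' --- does not close the gap. The integrand $(1-f_\mathrm{Pr}(t))t$ vanishes only for $t\leq R_\mathrm{LoS}$; for $t>R_\mathrm{LoS}$ it equals $t$, so $\Xi(\Im_\mathrm{LoS})=(\Im_\mathrm{LoS}^2-R_\mathrm{LoS}^2)/2>0$ whenever $\Im_\mathrm{LoS}>R_\mathrm{LoS}$, which can occur for $y$ near $R_\mathrm{LoS}$ depending on $\beta^{\rm{mm}}_\mathrm{NLoS}/\beta^{\rm{mm}}_\mathrm{LoS}$ and the exponents. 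The term must instead be dropped on modeling grounds: once only LoS links are ``used for power transfer,'' NLoS BSs deliver zero directed power and therefore cannot win the association, so the exclusion event that $\Xi(\Im_\mathrm{LoS})$ encodes (no NLoS BS providing more power) holds with probability one. With that substitution --- and $\Theta(y)=y^2/2$ on $[0,R_\mathrm{LoS}]$ --- your integral reduces to $2\pi\lambda_\mathrm{mm}\int_0^{\xi}ye^{-\pi\lambda_\mathrm{mm}y^2}dy=1-e^{-\pi\lambda_\mathrm{mm}\xi^2}$ and the rest of your argument (substitution into \eqref{corollary2}, cancellation of $\pi$) goes through and recovers \eqref{col_2_2}.
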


\begin{proof}
Based on \textbf{Corollary 1} and \eqref{mmWave_pf1}, we can directly obtain \eqref{corollary2}. By considering the LoS mmWave model with $f_\mathrm{Pr}\left(R\right)=\mathbf{1}\left(R \leq R_\mathrm{LoS}\right)$~\cite{Tianyang_arxiv2014,J_Park_2016}, the probability that the directed transferred power is larger than a threshold $P_\mathrm{th}$ under LoS mmWave power transfer is found as
\begin{align}\label{mmWave_th_P}
&\overline{\Psi}_\mathrm{D}^{\mathrm{mm}}\left(P_\mathrm{th}\right) \nonumber\\
&=\Pr\left(\eta_\mathrm{mm}{{{P_\mathrm{mm}}}} \mathrm{M_B}\mathrm{M_D} L\left( {\left| Y_o \right|}
 \right)\mathbf{1}\left(\left| Y_o \right|<R_\mathrm{LoS}\right)> P_\mathrm{th}  \right) \nonumber\\
 &\mathop = \limits^{(\rm{a})}\int_0^{\xi} 2 \pi \lambda_\mathrm{mm} r \exp\left(-\pi \lambda_\mathrm{mm} r^2\right) dr\nonumber\\
 &=1-\exp\left(-\pi \lambda_\mathrm{mm} \xi^2 \right),
\end{align}
where $\xi=\min\left\{R_\mathrm{LoS},\varsigma\left(\beta^{\rm{mm}}_\mathrm{LoS},\alpha^{\rm{mm}}_\mathrm{LoS}\right)\right\}$, step (a) is obtained by considering the fact that UEs try to be connected to the nearest BSs such that there exist LoS links. Substituting \eqref{mmWave_th_P} into \eqref{corollary2}, we obtain  \eqref{col_2_2} and complete the proof.
\end{proof}

Based on \textbf{Corollary 2}, we find that the mmWave tier can achieve better energy coverage than the sub-6 GHz tier, when the scale of sub-6 GHz tier is lower than the right-hand-side (RHS) of \eqref{corollary2}.

\subsection{Ambient RF Harvested Power}
In order to avoid singularity at zero distance and ensure the finite moments of the sum of the ambient RF signals, we incorporate the distance constraint into the path loss function in this subsection, which is $\tilde{L}\left(\left|X\right|\right)=\beta { {{\left( \max\left(r_o,\left|X\right|\right) \right) }}^{ - {\alpha}}}$ with a reference distance $r_o $~\cite{Baccelli2006_gen,Kaibin2014}. It should be noted that the reference distance has negligible effect on the evaluation of the energy coverage probability, since the harvested energy is usually larger than the predefined threshold when $\left|X\right|\leq r_o$.

\subsubsection{Sub-6 GHz Tier}
In the sub-6 GHz tier, let $\overline{\Psi}_\mathrm{A}^{mm}\left(P_\mathrm{th}\right)$ denote the probability that the ambient RF harvested power is larger than a threshold $P_\mathrm{th}$, $\overline{\Psi}_\mathrm{A}^{\mu}\left(P_\mathrm{th}\right)$ is upper bounded as
\begin{align}\label{mmWave_density_value1223}
&\overline{\Psi}_\mathrm{A}^{\mu}\left(P_\mathrm{th}\right) \leq \min\left\{{\mathrm{var}\left[\Xi\right]}/{\left(\frac{P_\mathrm{th}}{\eta_\mu {P_\mu}}-\mathbb{E}\left[\Xi\right]\right)^2},1\right\},
\end{align}
where $\mathbb{E}\left[\Xi\right]$ and $\mathrm{var}\left[\Xi\right]$ are respectively given by \eqref{Exp_Xi_1999} and \eqref{var_Xi_2000} at the top of this page. In \eqref{Exp_Xi_1999} and \eqref{var_Xi_2000}, $E_{(n)}(z)=\int_1^\infty e^{-zt} t^{-n} dt$ is the exponential integral~\cite{gradshteyn}.
\begin{figure*}[htbp]
\normalsize
  \begin{align}\label{Exp_Xi_1999}
& \mathbb{E}\left[\Xi\right]=\beta_\mu 2\pi \lambda_\mu \left(\frac{{r_o^{ - {\alpha _\mu }}}}{{\rm{2}}}\left( {r_{\rm{o}}^2 - {{\left( {\pi {\lambda _\mu }} \right)}^{ - 1}}\left( {1 - {e^{ - \pi {\lambda _\mu }r_{\rm{o}}^2}}} \right)} \right)+\frac{{r_o^{2 - {\alpha _\mu }}}}{{{\alpha _\mu } - 2}} - \frac{{r_o^{2 - {\alpha _\mu }}}}{2}{E_{\left( {\frac{{{\alpha _\mu }}}{2}} \right)}}\left( {\pi {\lambda _\mu }r_o^2} \right)\right)
 \end{align}
\hrulefill \vspace*{0pt}
 \begin{align}\label{var_Xi_2000}
&\mathrm{var}\left[\Xi\right]=\beta_\mu^2 4 \pi \lambda_\mu \left(\frac{{r_o^{ - 2{\alpha _\mu }}}}{{\rm{2}}}\left( {r_{\rm{o}}^2 - {{\left( {\pi {\lambda _\mu }} \right)}^{ - 1}}\left( {1 - {e^{ - \pi {\lambda _\mu }r_{\rm{o}}^2}}} \right)} \right)+\frac{{r_o^{2 - 2{\alpha _\mu }}}}{{2{\alpha _\mu } - 2}} - \frac{{r_o^{2 - 2{\alpha _\mu }}}}{2}{E_{\left( {{\alpha _\mu }} \right)}}\left( {\pi {\lambda _\mu }r_o^2} \right) \right)
 \end{align}
\hrulefill \vspace*{0pt}
\end{figure*}
\begin{proof}
Based on \eqref{mu_MD_receive_power}, $\overline{\Psi}_\mathrm{A}^\mu\left(P_\mathrm{th}\right)$ is calculated as
\begin{align}\label{AmWPT_mmWave_1}
&\overline{\Psi}_\mathrm{A}^\mu\left(P_\mathrm{th}\right)=\Pr\left(\Lambda_{\mu_2} > P_\mathrm{th} \right) \nonumber\\
&=\Pr\Big( \underbrace{ \sum\limits_{k  \in {\Phi_\mu\left(\lambda_\mu\right)}\setminus\left\{ o \right\}} { \left|{\mathbf{h}_{k,o} \frac{\mathbf{h}_k^H}{\left\|\mathbf{h}_k\right\|} }\right|^2 \tilde{L}\left( {\left| {{X_{k ,\mu}}} \right|}\right)}}_{\Xi} > \frac{P_\mathrm{th}}{\eta_\mu {P_\mu}} \Big) \nonumber\\
&\mathop  \le \limits^{(b)} \min\left\{\frac{\mathrm{var}\left[\Xi\right]}{\left(\frac{P_\mathrm{th}}{\eta_\mu {P_\mu}}-\mathbb{E}\left[\Xi\right]\right)^2},1\right\},
\end{align}
where step (b) is from the Chebyshev's inequality. Here, $\mathbb{E}\left[\Xi\right]$ and $\mathrm{var}\left[\Xi\right]$ denote  the expectation and variance of $\Xi$, respectively.

We first derive $\mathbb{E}\left[\Xi\right]$. By using the Campbell's theorem~\cite{Baccelli2009}, $\mathbb{E}\left[\Xi\right]$ is given by
\begin{align}\label{Xi_exp_cam}
\mathbb{E}\left[\Xi\right]=&\mathbb{E}\left[\left|{\mathbf{h}_{k,o} \frac{\mathbf{h}_k^H}{\left\|\mathbf{h}_k\right\|} }\right|^2\right]
\beta_\mu   2\pi \lambda_\mu \nonumber\\
&\times \int_0^\infty \int_r^\infty \left(\max\left(r_o,t\right)\right)^{-\alpha_\mu} t f_{\left| {{X_{{o}}}} \right|} \left( r \right) dt dr,
\end{align}
where $f_{\left| {{X_{{o}}}} \right|} \left( r \right)$ is given by \eqref{pf1_2}. Since ${\mathbf{h}_{k,o} \frac{\mathbf{h}_k^H}{\left\|\mathbf{h}_k\right\|} }$ follows complex Gaussian distribution and is independent of $\mathbf{h}_k$, we have $\left|{\mathbf{h}_{k,o} \frac{\mathbf{h}_k^H}{\left\|\mathbf{h}_k\right\|} }\right|^2 \sim \exp\left(1\right)$, thus $\mathbb{E}\left[\left|{\mathbf{h}_{k,o} \frac{\mathbf{h}_k^H}{\left\|\mathbf{h}_k\right\|} }\right|^2\right]=1$. By changing the order of integration, we further derive \eqref{Xi_exp_cam} as
\begin{align}\label{Xi_exp_cam_111}
&\mathbb{E}\left[\Xi\right]=\beta_\mu  2\pi \lambda_\mu  \int_0^\infty  \left(\max\left(r_o,t\right)\right)^{-\alpha_\mu} t \left( {1 - \exp \left( { - \pi {\lambda _\mu }{t^2}} \right)} \right)dt.
\end{align}
After calculating the integral in \eqref{Xi_exp_cam_111}, we obtain \eqref{Exp_Xi_1999}.

The variance of $\Xi$ is derived as
\begin{align}\label{Xi_calculation_2016}
&\mathrm{var}\left[\Xi\right]={\left. {\frac{{{\partial ^2}}}{{\partial {s^2}}}\mathbb{E}\left[ {\exp \left( {s\Xi } \right)} \right]} \right|_{s{\rm{ = }}0}} - {\left( {\mathbb{E}\left[ \Xi  \right]} \right)^2}\nonumber\\
& \mathop = \limits^{(c)} \mathbb{E}\left[\left|{\mathbf{h}_{k,o} \frac{\mathbf{h}_k^H}{\left\|\mathbf{h}_k\right\|} }\right|^4\right]
\beta_\mu^2  2\pi \lambda_\mu \nonumber\\
&~~\times\int_0^\infty \int_r^\infty \left(\max\left(r_o,t\right)\right)^{-2\alpha_\mu} t f_{\left| {{X_{{o}}}} \right|} \left( r \right) dt dr,
\end{align}
where step (c) is obtained by using the modified Campbell's theorem. Since $\mathbb{E}\left[\left|{\mathbf{h}_{k,o} \frac{\mathbf{h}_k^H}{\left\|\mathbf{h}_k\right\|} }\right|^4\right]=2$, we can finally obtain \eqref{var_Xi_2000} and complete the proof. \end{proof}

\subsubsection{MmWave Tier} In the mmWave tier, let $\overline{\Psi}_\mathrm{A}^{\mathrm{mm}}\left(P_\mathrm{th}\right)$ denote the probability that the ambient RF harvested power is larger than a threshold $P_\mathrm{th}$, given $\varepsilon$, $\overline{\Psi}_\mathrm{A}^{\mathrm{mm}}\left(P_\mathrm{th}\right)>\varepsilon$ holds when the mmWave BS density meets
\begin{align}\label{mu_BS_density_1}
 \lambda_\mathrm{mm} > \left(\frac{P_\mathrm{th}}{ \vartheta_\mathrm{mm} \eta_\mathrm{mm} {P_\mathrm{mm}} }\right)^{\frac{2}{\alpha^{\rm{mm}}_\mathrm{LoS}}},
\end{align}
where $ \vartheta_\mathrm{mm}$ is a constant value defined by
\begin{align}\label{varpai_u}
\Pr\Big(\sum\limits_{\ell  \in {\Phi_{\rm{mm}} \left(1\right)}\setminus\left\{ o \right\}} {G_\ell}\Delta(\left|Y_{\ell ,\rm{mm}}\right|)>\vartheta_\mathrm{mm}\Big)=\varepsilon
\end{align}
with $\Delta(\left|Y_{\ell ,\rm{mm}}\right|)=\tilde{L}_\mathrm{LoS}\left(\left|Y_{\ell ,\rm{mm}}\right|\right)\amalg(f_\mathrm{Pr}\left(\left|Y_{\ell ,\rm{mm}}\right|\right))$, where $\amalg(x)$ represents the Bernoulli distribution.

\begin{proof}
Based on \eqref{mm_MD_receive_power}, the probability that the ambient RF harvested power $\Lambda_{\mathrm{mm}_2}$ is larger than a threshold $P_\mathrm{th}$  can be obtained as
\begin{align}\label{DWPT_u_1_Ambient}
&\overline{\Psi}_\mathrm{A}^{\rm{mm}}\left(P_\mathrm{th}\right)=\Pr\left(\Lambda_{\mathrm{mm}_2}> P_\mathrm{th} \right)\nonumber\\
&=\Pr\left(\sum\limits_{ \ell \in {\Phi_\mathrm{mm}\left(\lambda_\mathrm{mm}\right)}\setminus\left\{ o \right\}} {G_\ell} \tilde{L}\left( {\left| {{Y_{\ell,\mathrm{mm}}}} \right|}\right) >\frac{ P_\mathrm{th}}{ \eta_\mathrm{mm} P_\mathrm{mm}} \right).
\end{align}
Since the ambient RF energy from the NLoS BSs is negligible, $\overline{\Psi}_\mathrm{A}^{\rm{mm}}\left(P_\mathrm{th}\right)$ can be lower-bounded as
\begin{align}\label{LoS_Ambient_RF}
&\overline{\Psi}_\mathrm{A}^{\rm{mm}}\left(P_\mathrm{th}\right) \geq \Pr\Big(\sum\limits_{ \ell \in {\Phi_\mathrm{mm}\left(\lambda_\mathrm{mm}\right)}\setminus\left\{ o \right\}} {G_\ell}\Delta(\left|Y_{\ell ,\rm{mm}}\right|)>\frac{ P_\mathrm{th}}{ \eta_\mathrm{mm} P_\mathrm{mm}} \Big)\nonumber\\
&~\qquad\mathop = \limits^{(d)} \Pr\Big( \sum\limits_{ \ell \in {\Phi_\mathrm{mm}\left(1\right)}\setminus\left\{ o \right\}} {G_\ell}\Delta(\left|Y_{\ell ,\rm{mm}}\right|)>\lambda_\mathrm{mm}^{-\frac{\alpha^{\rm{mm}}_\mathrm{LoS}}{2}}\frac{ P_\mathrm{th}}{ \eta_\mathrm{mm} P_\mathrm{mm}} \Big),
\end{align}
where step (d) is obtained by using the Mapping theorem.

Given $\varepsilon$, we define the constant $\vartheta_\mu$ as \eqref{varpai_u}. Then we can directly obtain $\overline{\Psi}_\mathrm{A}^{\mu}\left(P_\mathrm{th}\right)>\varepsilon$ if and only if condition \eqref{mu_BS_density_1} is satisfied, which completes the proof.
\end{proof}

%Based on \eqref{mmWave_density_value1223} and \eqref{mu_BS_density_1}, we have the following corollary:
\begin{corollary}
 The ambient RF energy harvesting in the mmWave tier  outperforms that in the sub-6 GHz tier under the condition \eqref{mu_BS_density_1}, where $\vartheta_\mathrm{mm}$ is given by \eqref{varpai_u} with $\varepsilon=$\\$\min\left\{{\mathrm{var}\left[\Xi\right]}/{\left(\frac{P_\mathrm{th}}{\eta_\mu {P_\mu}}-\mathbb{E}\left[\Xi\right]\right)^2},1\right\}$.
\end{corollary}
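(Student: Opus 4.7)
The plan is to combine the two preceding results directly: the Chebyshev-type upper bound on $\overline{\Psi}_\mathrm{A}^{\mu}(P_\mathrm{th})$ established in \eqref{mmWave_density_value1223}, and the sufficient density condition \eqref{mu_BS_density_1} that guarantees $\overline{\Psi}_\mathrm{A}^{\mathrm{mm}}(P_\mathrm{th}) > \varepsilon$ for any prescribed $\varepsilon \in (0,1)$. The corollary is essentially a transitivity statement, so no new integrals or stochastic-geometry machinery should be required.

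First, I would set
\begin{align}
\varepsilon^\star := \min\!\left\{\frac{\mathrm{var}[\Xi]}{\left(\frac{P_\mathrm{th}}{\eta_\mu P_\mu}-\mathbb{E}[\Xi]\right)^2},\,1\right\},
\end{align}
which, by \eqref{mmWave_density_value1223}, is an upper bound on the sub-6 GHz ambient coverage probability, i.e.\ $\overline{\Psi}_\mathrm{A}^{\mu}(P_\mathrm{th}) \leq \varepsilon^\star$. Then I would instantiate the sufficient condition developed just above the corollary: by the argument leading to \eqref{mu_BS_density_1}, once $\vartheta_\mathrm{mm}$ is defined via \eqref{varpai_u} using this particular choice $\varepsilon = \varepsilon^\star$, the bound $\lambda_\mathrm{mm} > (P_\mathrm{th}/(\vartheta_\mathrm{mm}\eta_\mathrm{mm}P_\mathrm{mm}))^{2/\alpha^{\mathrm{mm}}_\mathrm{LoS}}$ ensures $\overline{\Psi}_\mathrm{A}^{\mathrm{mm}}(P_\mathrm{th}) > \varepsilon^\star$.

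Chaining these two inequalities gives
\begin{align}
\overline{\Psi}_\mathrm{A}^{\mathrm{mm}}(P_\mathrm{th}) > \varepsilon^\star \;\geq\; \overline{\Psi}_\mathrm{A}^{\mu}(P_\mathrm{th}),
\end{align}
which is precisely the claim that mmWave ambient RF energy harvesting outperforms the sub-6 GHz counterpart under the stated density condition. I would conclude by noting that the Mapping-theorem scaling step (d) in the earlier proof justifies the exponent $2/\alpha^{\mathrm{mm}}_\mathrm{LoS}$ appearing in \eqref{mu_BS_density_1}, and that the NLoS contribution is dropped via the same lower bound as in \eqref{LoS_Ambient_RF}, so no additional hypothesis is needed.

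The only subtle point, and what I would flag as the main obstacle, is the asymmetry in how the two sides are controlled: the sub-6 GHz side is a Chebyshev upper bound (hence possibly loose), while the mmWave side is a genuine lower bound via the Mapping theorem. Consequently the corollary really asserts that the mmWave tier beats a worst-case estimate of the sub-6 GHz tier, not necessarily the exact value of $\overline{\Psi}_\mathrm{A}^{\mu}(P_\mathrm{th})$. I would make this interpretation explicit so the reader does not mistake the statement for a tight comparison, and I would also remark that the choice $\varepsilon^\star = 1$ renders the condition vacuous (since then no finite $\vartheta_\mathrm{mm}$ can satisfy \eqref{varpai_u}), so the bound is informative precisely in the regime where the Chebyshev ratio is strictly less than one, i.e.\ when $P_\mathrm{th}/(\eta_\mu P_\mu)$ is sufficiently separated from $\mathbb{E}[\Xi]$.
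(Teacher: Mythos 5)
Your argument is exactly the one the paper intends (and leaves implicit): take $\varepsilon$ equal to the Chebyshev upper bound \eqref{mmWave_density_value1223} on $\overline{\Psi}_\mathrm{A}^{\mu}(P_\mathrm{th})$, invoke the sufficient condition \eqref{mu_BS_density_1} to get $\overline{\Psi}_\mathrm{A}^{\mathrm{mm}}(P_\mathrm{th})>\varepsilon\geq\overline{\Psi}_\mathrm{A}^{\mu}(P_\mathrm{th})$, and conclude by transitivity. The proposal is correct, and your added caveats (the upper-bound/lower-bound asymmetry and the vacuity when $\varepsilon^\star=1$) are accurate observations that the paper does not state.
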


It is indicated from \textbf{Corollary 3} that in practice, it is still possible that the amount of ambient RF energy harvested from the mmWave tier is larger than that from the sub-6 GHz tier.

\subsection{Power Transfer Mode Selection}
In the above, we have analyzed and compared the wireless energy harvesting in the sub-6 GHz and mmWave tiers. Here, we consider mode selection for WPT in hybrid 5G scenario, i.e., UEs are at liberty to select a sub-6 GHz BS or mmWave BS for maximizing the directed \nolinebreak[4] transferred power, since the amount of harvested energy from ambient RF is much smaller compared to that from directed power transfer,~see \cite{Kaibin_Huang2015,Lifeng_wang2015_globcom,Robert_heath_mmWave_energy}, which is also illustrated in the simulation results of section V. Thus, we have the following Propositions.

\begin{proposition}
The association probability that a UE selects the sub-6 GHz WPT is given by
\begin{align}\label{prob_mu_wave}
\mathcal{H}_\mu&=2 \pi \lambda_\mu \int_0^\infty r  \exp\Big(-2 \pi \lambda_\mathrm{mm} \Big( \int_0^{\hat{R}^{\rm{mm}}_\mathrm{LoS}\left(r\right)}f_{\Pr}(t) t dt \nonumber\\
&+\int_0^{\hat{R}^{\rm{mm}}_\mathrm{NLoS}\left(r\right)} \left(1-f_{\Pr}(t)\right) t dt \Big)-\pi \lambda_\mu r^2\Big)  dr,
\end{align}
where $\hat{R}^{\rm{mm}}_\mathrm{LoS}\left(r\right)=\left( \varpi \frac{\beta^{\rm{mm}}_\mathrm{LoS}}{\beta_\mu}\right)^{{1}/{\alpha^{\rm{mm}}_\mathrm{LoS}}} r^{\frac{\alpha_\mu}{\alpha^{\rm{mm}}_\mathrm{LoS}}} $ and $\hat{R}^{\rm{mm}}_\mathrm{NLoS}\left(r\right)= \left( \varpi \frac{\beta^{\rm{mm}}_\mathrm{NLoS}}{\beta_\mu}\right)^{{1}/{\alpha^{\rm{mm}}_\mathrm{NLoS}}} r^{\frac{\alpha_\mu}{\alpha^{\rm{mm}}_\mathrm{NLoS}}}$  with $\varpi=\frac{\eta_\mathrm{mm}{{{P_\mathrm{mm}}}} \mathrm{M_B}\mathrm{M_D}}{\eta_\mu{{{P_\mu}}} N}$.
\end{proposition}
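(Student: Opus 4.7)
The plan is to translate the WPT mode-selection event into a geometric event on the underlying PPPs and then apply the void probability for independently thinned processes. Because mode selection picks the BS delivering the largest directed power, and because $\|\mathbf{h}_o\|^2 \approx N$ in the large array regime consistent with the analysis used in Corollary~1, I would identify the sub-6 GHz directed power with $\eta_\mu P_\mu N \beta_\mu r^{-\alpha_\mu}$ (where $r=|X_o|$) and the mmWave directed power from a tagged BS at distance $y$ with either $\eta_\mathrm{mm} P_\mathrm{mm} M_B M_D \beta^{\rm mm}_\mathrm{LoS} y^{-\alpha^{\rm mm}_\mathrm{LoS}}$ or the NLoS analogue, depending on link type.

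First, I would characterise when the sub-6 GHz tier wins. Writing $\varpi=\eta_\mathrm{mm} P_\mathrm{mm} M_B M_D/(\eta_\mu P_\mu N)$, solving the inequality ``sub-6 GHz directed power $\ge$ LoS mmWave directed power from a BS at distance $y_L$'' for $y_L$ gives the threshold $y_L \ge \hat{R}^{\rm mm}_\mathrm{LoS}(r)$, and the analogous threshold $\hat{R}^{\rm mm}_\mathrm{NLoS}(r)$ for NLoS BSs. Thus, conditioned on the distance $r$ to the serving sub-6 GHz BS, the event ``select sub-6 GHz WPT'' coincides with ``no LoS mmWave BS inside the ball of radius $\hat{R}^{\rm mm}_\mathrm{LoS}(r)$'' and ``no NLoS mmWave BS inside the ball of radius $\hat{R}^{\rm mm}_\mathrm{NLoS}(r)$.''

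Next, I would invoke independent thinning of $\Phi_\mathrm{mm}(\lambda_\mathrm{mm})$ via $f_\mathrm{Pr}$. The LoS BSs form an inhomogeneous PPP of intensity $\lambda_\mathrm{mm} f_\mathrm{Pr}(t)$ and the NLoS BSs an independent PPP of intensity $\lambda_\mathrm{mm}(1-f_\mathrm{Pr}(t))$. Applying the PPP void probability to each, and using independence, the conditional probability that both balls are empty equals
\begin{align*}
\exp\!\Big(-2\pi \lambda_\mathrm{mm}\!\!\int_0^{\hat{R}^{\rm mm}_\mathrm{LoS}(r)}\!\!\! f_\mathrm{Pr}(t)\,t\,dt -2\pi \lambda_\mathrm{mm}\!\!\int_0^{\hat{R}^{\rm mm}_\mathrm{NLoS}(r)}\!\!\!(1-f_\mathrm{Pr}(t))\,t\,dt\Big).
\end{align*}

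Finally, I would deconditon on $r$ using the nearest-neighbour density $f_{|X_o|}(r)=2\pi \lambda_\mu r\exp(-\pi\lambda_\mu r^2)$ from \eqref{pf1_2}, collecting the two exponentials into a single one and arriving at the stated expression for $\mathcal{H}_\mu$. The main subtlety is not a computation but a modelling point: the comparison must be between the \emph{best} mmWave candidate in each link class against the sub-6 GHz candidate, so that the selection event reduces to the two independent void events above. Once that is pinned down, the rest is routine stochastic-geometry bookkeeping; the large-$N$ approximation inherited from Corollary~1 is what makes the threshold $\hat{R}^{\rm mm}_\mathrm{LoS}(r)$ a deterministic function of $r$ rather than a random one, keeping the integrand tractable.
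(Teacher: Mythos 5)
Your proposal is correct and follows essentially the same route as the paper's own proof: the large-$N$ approximation $\left\|\mathbf{h}_o\right\|^2 \approx N$, the reduction of the selection event to the two threshold radii $\hat{R}^{\rm{mm}}_\mathrm{LoS}(r)$ and $\hat{R}^{\rm{mm}}_\mathrm{NLoS}(r)$, the void probabilities of the independently thinned LoS and NLoS mmWave point processes, and deconditioning with the nearest-BS distance density \eqref{pf1_2}. No gaps; the modelling point you flag about comparing against the best candidate in each link class is exactly how the paper's step (e) is organised.
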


\begin{proof}
We note that in the mmWave cell, the small-scale fading is negligible, and the directed transferred power is dominated by the mmWave pathloss. In the sub-6 GHz cell,  the small-scale fading is averaged out when the number of BS antennas is large, i.e., $\left\|\mathbf{h}_o\right\|^2 \approx N$. Therefore, the probability that a UE selects the sub-6 GHz WPT can be expressed as
\begin{align}\label{LM_1}
&\mathcal{H}_\mu =\Pr\left(\Lambda_{\mu_1}>\Lambda_{\mathrm{mm}_1}\right) \nonumber\\
&=\mathbb{E}_{\left| {{X_{{o}}}} \right|}\left\{\Pr\left(  \eta_\mu{{{P_\mu}}} N L\left( {\left| {{X_{{o}}}} \right|} \right)> \eta_\mathrm{mm}{{{P_\mathrm{mm}}}} \mathrm{M_B}\mathrm{M_D} L\left( R \right)
 \right)  \right\} \nonumber\\
&\mathop=\limits^{(e)} \mathbb{E}_{\left| {{X_{{o}}}} \right|}\Bigg[ \underbrace{\Pr\Big(R_\ell^{{\rm{mm}}}> \Big( \varpi \frac{\beta^{\rm{mm}}_\mathrm{LoS}}{\beta_\mu}\Big)^{{1}/{\alpha^{\rm{mm}}_\mathrm{LoS}}} \left| {{X_{{o}}}} \right|^{\frac{\alpha_\mu}{\alpha^{\rm{mm}}_\mathrm{LoS}}} \left| {\ell \in \Phi_\mathrm{mm}^{\mathrm{LoS}}} \right. \Big)}_{\Theta_\mathrm{L}\left(\left| {{X_{{o}}}} \right|\right)}\nonumber\\
&\times \underbrace{\Pr\Big(R_\ell^{{\rm{mm}}}> \Big( \varpi \frac{\beta^{\rm{mm}}_\mathrm{NLoS}}{\beta_\mu}\Big)^{{1}/{\alpha^{\rm{mm}}_\mathrm{NLoS}}} \left| {{X_{{o}}}} \right|^{\frac{\alpha_\mu}{\alpha^{\rm{mm}}_\mathrm{NLoS}}} \left| {\ell \in \Phi_\mathrm{mm}^{\mathrm{NLoS}}} \right. \Big)}_{_{\Theta_\mathrm{N}\left(\left| {{X_{{o}}}} \right|\right)}}
\Bigg] \nonumber\\
&=\int_0^\infty {\Theta_\mathrm{L}}\left(r\right) {\Theta_\mathrm{N}}\left(r\right) f_{\left| {{X_{{o}}}} \right|} \left( r \right) dr,
\end{align}
where step (e) is obtained by  considering two independent LoS BS process $\Phi_\mathrm{mm}^{\mathrm{LoS}}$ and NLoS BS process $\Phi_\mathrm{mm}^{\mathrm{NLoS}}$, $\varpi=\frac{\eta_\mathrm{mm}{{{P_\mathrm{mm}}}} \mathrm{M_B}\mathrm{M_D}}{\eta_\mu{{{P_\mu}}} N}$, $f_{\left| {{X_{{o}}}} \right|} \left( r \right)$ is the PDF of $\left| {{X_{{o}}}} \right|$ given in \eqref{pf1_2}. By employing the void probability, we can obtain $\Theta_\mathrm{L}$ as
\begin{align}\label{theta_LoS_BS}
\Theta_\mathrm{L}\left(r\right)=&\Pr\Big(\mathrm{No~LoS~mmWave~BS~closer~than} ~~\hat{R}^{\rm{mm}}_\mathrm{LoS}\left(r\right)\Big)\nonumber\\
=&\exp\Big(-2\pi\lambda_\mathrm{mm}\int_0^{\hat{R}^{\rm{mm}}_\mathrm{LoS}\left(r\right)} f_{\Pr}(t) t dt \Big),
\end{align}
where $\hat{R}^{\rm{mm}}_\mathrm{LoS}\left(r\right)=\left( \varpi {\beta^{\rm{mm}}_\mathrm{LoS}}/{\beta_\mu}\right)^{{1}/{\alpha^{\rm{mm}}_\mathrm{LoS}}} r^{{\alpha_\mu}/{\alpha^{\rm{mm}}_\mathrm{LoS}}} $. Similar to \eqref{theta_LoS_BS}, $\Theta_\mathrm{N}$ is given by
\begin{align}\label{theta_NLoS_BS}
\hspace{-0.3 cm}\Theta_\mathrm{N}\left(r\right)=&\exp\Big(-2\pi\lambda_\mathrm{mm}\int_0^{\hat{R}^{\rm{mm}}_\mathrm{NLoS}\left(r\right)} \left(1-f_{\Pr}(t)\right) t dt \Big),
\end{align}
where $\hat{R}^{\rm{mm}}_\mathrm{NLoS}\left(r\right)= \left( \varpi \frac{\beta^{\rm{mm}}_\mathrm{NLoS}}{\beta_\mu}\right)^{{1}/{\alpha^{\rm{mm}}_\mathrm{NLoS}}} r^{\frac{\alpha_\mu}{\alpha^{\rm{mm}}_\mathrm{NLoS}}} $. Substituting \eqref{theta_LoS_BS} and \eqref{theta_NLoS_BS} into \eqref{LM_1}, we obtain the desired result \eqref{prob_mu_wave}.
\end{proof}
\begin{proposition}
The association probability that a UE selects a LoS mmWave BS for the mmWave WPT is given by
\begin{align}\label{prob_mm_wave}
\mathcal{H}_{\mathrm{mm}}^{\mathrm{LoS}}&=2 \pi \lambda_\mathrm{mm} \int_0^\infty r f_{\Pr}(r)  \exp\Big(-2 \pi \lambda_\mathrm{mm} \Big( \int_0^{r}f_{\Pr}(t) t dt \nonumber\\
&+\int_0^{\widetilde{R}^{\rm{mm}}_\mathrm{NLoS}\left(r\right)} \left(1-f_{\Pr}(t)\right) t dt \Big)- \lambda_\mu \widetilde{A}_\mu\left(r\right)   \Big)  dr,
\end{align}
where ${\widetilde{R}^{\rm{mm}}_\mathrm{NLoS}\left(r\right)}=\left(\frac{\beta_\mathrm{NLoS}^{\mathrm{mm}}}{\beta_\mathrm{LoS}  ^{\mathrm{mm}}}\right)^{1/\alpha^{\rm{mm}}_\mathrm{NLoS}} r^{\frac{\alpha^{\rm{mm}}_\mathrm{LoS}}{\alpha^{\rm{mm}}_\mathrm{NLoS}}   }$ and $\widetilde{A}_\mu\left(r\right)=\pi{\left(\frac{ \beta_\mu}{\varpi \beta^{\rm{mm}}_\mathrm{LoS}} \right)^{2/\alpha_\mu} r^{ \frac{2\alpha^{\rm{mm}}_\mathrm{LoS}}{\alpha_\mu} }  }$. Then the probability that a UE selects a NLoS mmWave BS for the mmWave WPT is $\mathcal{H}_{\mathrm{mm}}^{\mathrm{NLoS}}=1-\mathcal{H}_\mu-\mathcal{H}_{\mathrm{mm}}^{\mathrm{LoS}}$.
\end{proposition}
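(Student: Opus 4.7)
The plan is to decompose the event that a UE selects a LoS mmWave BS for WPT into three independent sub-events and then integrate over the distance to that serving LoS mmWave BS. Let $r$ be the distance from the typical UE to its nearest LoS mmWave BS. The selection event means simultaneously: (i) there exists a LoS mmWave BS at distance $r$ and no closer LoS mmWave BS; (ii) no NLoS mmWave BS offers a larger directed transferred power than this LoS candidate; and (iii) no sub-$6$\,GHz BS offers a larger directed transferred power than this LoS candidate. Using the large-$N$ approximation $\|\mathbf{h}_o\|^2\approx N$ (as in \textbf{Corollary 1} and \textbf{Proposition 1}) and the fact that small-scale fading is omitted in the mmWave tier, each of these three comparisons reduces to a geometric condition on the pathloss only, which is what makes the void probability machinery applicable.

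First, I would apply the independent thinning of $\Phi_\mathrm{mm}$ into a LoS process $\Phi_\mathrm{mm}^{\mathrm{LoS}}$ with retention probability $f_{\Pr}(t)$ and a NLoS process $\Phi_\mathrm{mm}^{\mathrm{NLoS}}$ with retention probability $1-f_{\Pr}(t)$, which are mutually independent inhomogeneous PPPs and are also independent of $\Phi_\mu$. The probability density (with respect to $r$) that the nearest point of $\Phi_\mathrm{mm}^{\mathrm{LoS}}$ lies at distance $r$ is obtained from the corresponding void probability and equals
\begin{equation*}
2\pi\lambda_\mathrm{mm}\, r\, f_{\Pr}(r)\,\exp\!\Big(-2\pi\lambda_\mathrm{mm}\int_0^{r} t\, f_{\Pr}(t)\,dt\Big),
\end{equation*}
which produces the leading factor in \eqref{prob_mm_wave}.

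Next, I would translate events (ii) and (iii) into minimum-distance constraints. For event (ii), requiring $\beta^{\rm{mm}}_\mathrm{NLoS}R^{-\alpha^{\rm{mm}}_\mathrm{NLoS}}<\beta^{\rm{mm}}_\mathrm{LoS}r^{-\alpha^{\rm{mm}}_\mathrm{LoS}}$ for every NLoS mmWave BS at distance $R$ yields the threshold $\widetilde{R}^{\rm{mm}}_\mathrm{NLoS}(r)=(\beta^{\rm{mm}}_\mathrm{NLoS}/\beta^{\rm{mm}}_\mathrm{LoS})^{1/\alpha^{\rm{mm}}_\mathrm{NLoS}}r^{\alpha^{\rm{mm}}_\mathrm{LoS}/\alpha^{\rm{mm}}_\mathrm{NLoS}}$, and the corresponding void probability on $\Phi_\mathrm{mm}^{\mathrm{NLoS}}$ gives $\exp(-2\pi\lambda_\mathrm{mm}\int_0^{\widetilde{R}^{\rm{mm}}_\mathrm{NLoS}(r)} t(1-f_{\Pr}(t))\,dt)$. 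For event (iii), requiring $\eta_\mu P_\mu N \beta_\mu |X|^{-\alpha_\mu}<\eta_\mathrm{mm}P_\mathrm{mm}M_B M_D\beta^{\rm{mm}}_\mathrm{LoS}r^{-\alpha^{\rm{mm}}_\mathrm{LoS}}$ for every sub-$6$\,GHz BS at distance $|X|$ gives the threshold $(\beta_\mu/(\varpi\beta^{\rm{mm}}_\mathrm{LoS}))^{1/\alpha_\mu}r^{\alpha^{\rm{mm}}_\mathrm{LoS}/\alpha_\mu}$, whose squared value times $\pi$ is exactly $\widetilde{A}_\mu(r)$, so the void probability on $\Phi_\mu$ yields $\exp(-\lambda_\mu\widetilde{A}_\mu(r))$.

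Finally, by independence of the three point processes, I would multiply the three factors, integrate over $r\in(0,\infty)$, and collect the exponents to obtain \eqref{prob_mm_wave}. The expression for $\mathcal{H}_{\mathrm{mm}}^{\mathrm{NLoS}}$ then follows immediately from $\mathcal{H}_\mu+\mathcal{H}_{\mathrm{mm}}^{\mathrm{LoS}}+\mathcal{H}_{\mathrm{mm}}^{\mathrm{NLoS}}=1$ together with \textbf{Proposition 1}. I expect the main obstacle to be a careful bookkeeping step: ensuring that the pathloss-equivalence thresholds $\widetilde{R}^{\rm{mm}}_\mathrm{NLoS}(r)$ and the sub-$6$\,GHz threshold are derived consistently with the directed power expressions $\Lambda_{\mu_1}$ and $\Lambda_{\mathrm{mm}_1}$ from \eqref{mu_MD_receive_power} and \eqref{mm_MD_receive_power}, and that the large-$N$ deterministic approximation for $\|\mathbf{h}_o\|^2$ is invoked at the right place so that no residual small-scale fading term enters the void-probability arguments.
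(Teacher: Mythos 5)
Your proposal is correct and follows essentially the same route as the paper: condition on the distance $r$ to the nearest LoS mmWave BS (using the thinned-PPP void probability to get the density $2\pi\lambda_\mathrm{mm} r f_{\Pr}(r)e^{-2\pi\lambda_\mathrm{mm}\int_0^r t f_{\Pr}(t)dt}$), convert the power comparisons against the NLoS mmWave process and $\Phi_\mu$ into the distance thresholds $\widetilde{R}^{\rm{mm}}_\mathrm{NLoS}(r)$ and $(\beta_\mu/(\varpi\beta^{\rm{mm}}_\mathrm{LoS}))^{1/\alpha_\mu}r^{\alpha^{\rm{mm}}_\mathrm{LoS}/\alpha_\mu}$, and multiply the three independent void probabilities before integrating. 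The only cosmetic difference is that the paper normalizes the nearest-LoS-BS density by $\tau_\mathrm{L}$ and then reinstates that factor, which is equivalent to your unnormalized form.
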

\begin{proof}
We first define $\tau_\mathrm{L}$ as the probability that there exist LoS mmWave BSs. Similar to \eqref{LM_1}, the probability that a UE selects a LoS mmWave BS for the mmWave energy harvesting is calculated as
\begin{align}\label{LoS_mmWave_Pr}
&\mathcal{H}_{\mathrm{mm}}^{\mathrm{LoS}}=\tau_\mathrm{L} \mathbb{E}_{\left| {{Y_{{o}}}} \right|}
\bigg\{\underbrace{\Pr\left( \Lambda_{\mathrm{mm}_1}^{\mathrm{LoS}}> \Lambda_{\mu_1} \right)}_{\Theta_\mu\left(\left| {{Y_{{o}}}} \right|\right)} \times \nonumber\\
&\underbrace{\Pr\left(\Lambda_{\mathrm{mm}_1}^{\mathrm{LoS}} > \eta_\mathrm{mm}{{{P_\mathrm{mm}}}} \mathrm{M_B}\mathrm{M_D} \beta^{\rm{mm}}_\mathrm{NLoS} ({R_\ell^{\mathrm{mm}}})^{-\alpha^{\rm{mm}}_\mathrm{NLoS}}\left| {\ell \in \Phi_\mathrm{mm}^{\mathrm{NLoS}}} \right.\right)}_{{{\overline{\Theta}}_\mathrm{N}\left(\left| {{Y_{{o}}}} \right|\right)}} \bigg\}\nonumber\\
&=\tau_\mathrm{L} \int_0^\infty \Theta_\mu\left(r\right)  {{\overline{\Theta}}_\mathrm{N}\left(r\right)} f_{\left| {{Y_{{o}}}} \right|} \left( r \right) dr,
\end{align}
where $\Lambda_{\mathrm{mm}_1}^{\mathrm{LoS}}= \eta_\mathrm{mm}{{{P_\mathrm{mm}}}} \mathrm{M_B}\mathrm{M_D} \beta^{\rm{mm}}_\mathrm{LoS} \left| {{Y_{{o}}}} \right|^{-\alpha^{\rm{mm}}_\mathrm{LoS}} $ is the directed transferred power from the nearest LoS mmWave BS, and the PDF of $\left| {{Y_{{o}}}} \right|$ is given by~\cite{Tianyang_arxiv2014}
\begin{align}\label{PDF_Y_mmWave}
f_{\left| {{Y_{{o}}}} \right|} \left( r \right)=\frac{2 \pi \lambda_\mathrm{mm}}{\tau_\mathrm{L}} r f_{\Pr}\left(r\right) e^{-2 \pi \lambda_\mathrm{mm} \int_0^r
f_{\Pr}\left(t\right) t dt }.
\end{align}
Similar to \eqref{theta_LoS_BS}, ${\Theta_\mu\left(r\right)}$ is derived as
\begin{align}\label{theta_mu_Y}
& {\Theta_\mu\left(r\right)}= \Pr\left(\eta_\mathrm{mm}{{{P_\mathrm{mm}}}} \mathrm{M_B}\mathrm{M_D} \beta^{\rm{mm}}_\mathrm{LoS} {r}^{-\alpha^{\rm{mm}}_\mathrm{LoS}}> \eta_\mu{{{P_\mu}}} N \beta_\mu R_\mu^{-\alpha_\mu} \right) \nonumber\\
&=\Pr\Big(R_\mu > {\left(\frac{ \beta_\mu}{\varpi \beta^{\rm{mm}}_\mathrm{LoS}} \right)^{1/\alpha_\mu} r^{ \frac{\alpha^{\rm{mm}}_\mathrm{LoS}}{\alpha_\mu} }  }      \Big)\nonumber\\
&=\exp\left(- \lambda_\mu \widetilde{A}_\mu\left(r\right) \right),
\end{align}
where $\widetilde{A}_\mu\left(r\right)=\pi{\left(\frac{ \beta_\mu}{\varpi \beta^{\rm{mm}}_\mathrm{LoS}} \right)^{2/\alpha_\mu} r^{ \frac{2\alpha^{\rm{mm}}_\mathrm{LoS}}{\alpha_\mu} }  }$.
Then ${{\overline{\Theta}}_\mathrm{N}\left(r\right)}$ is similarly derived as
\begin{align}\label{theta_N_NLOS_mmwave}
%&{{\overline{\Theta}}_\mathrm{N}\left(r\right)}= \Pr\left( R_\ell^{\mathrm{mm}} > \left(\frac{\beta_\mathrm{NLoS}^{\mathrm{mm}}}{\beta_\mathrm{LoS}  ^{\mathrm{mm}}}\right)^{1/\alpha^{\rm{mm}}_\mathrm{NLoS}} r^{\frac{\alpha^{\rm{mm}}_\mathrm{LoS}}{\alpha^{\rm{mm}}_\mathrm{NLoS}}   }  \right) \nonumber\\
\hspace{-0.2 cm}{{\overline{\Theta}}_\mathrm{N}\left(r\right)}=\exp\Big(-2\pi\lambda_\mathrm{mm}\int_0^{\widetilde{R}^{\rm{mm}}_\mathrm{NLoS}\left(r\right)} \left(1-f_{\Pr}(t)\right) t dt \Big),
\end{align}
where ${\widetilde{R}^{\rm{mm}}_\mathrm{NLoS}\left(r\right)}=\left(\frac{\beta_\mathrm{NLoS}^{\mathrm{mm}}}{\beta_\mathrm{LoS}  ^{\mathrm{mm}}}\right)^{1/\alpha^{\rm{mm}}_\mathrm{NLoS}} r^{\frac{\alpha^{\rm{mm}}_\mathrm{LoS}}{\alpha^{\rm{mm}}_\mathrm{NLoS}}   }$.  Substituting \eqref{theta_mu_Y} and \eqref{theta_N_NLOS_mmwave} into \eqref{LoS_mmWave_Pr}, we obtain \eqref{prob_mm_wave} and complete the proof.
\end{proof}

When only LoS mmWave links with $f_\mathrm{Pr}\left(R\right)=\mathbf{1}\left(R \leq R_\mathrm{LoS}\right)$ are able to transfer energy by mmWave radiation, the association probability that a UE selects the sub-6 GHz WPT given by \eqref{prob_mu_wave} reduces to
\begin{align}\label{prob_mu_wave_simplified}
\widetilde{\mathcal{H}}_\mu&=2 \pi \lambda_\mu \int_0^\infty r  \exp\Big(- \pi \lambda_\mathrm{mm}\left(\hat{\hat{R}}\left(r\right)\right)^2-\pi \lambda_\mu r^2\Big)  dr,
\end{align}
where $\hat{\hat{R}}\left(r\right)=\min\bigg( \left(\mathrm{M_B}\mathrm{M_D}\right)^{{1}/{\alpha^{\rm{mm}}_\mathrm{LoS}}} \left(  \frac{\eta_\mathrm{mm}{{{P_\mathrm{mm}}}} \beta^{\rm{mm}}_\mathrm{LoS}}{\eta_\mu{{{P_\mu}}} N \beta_\mu }\right)^{{1}/{\alpha^{\rm{mm}}_\mathrm{LoS}}}$\\$ r^{\frac{\alpha_\mu}{\alpha^{\rm{mm}}_\mathrm{LoS}}},R_\mathrm{LoS}\bigg)$.
Accordingly, the association probability that a UE selects a LoS mmWave BS for the mmWave WPT is $\widetilde{\mathcal{H}}_{\mathrm{mm}}^{\mathrm{LoS}}=1-\widetilde{\mathcal{H}}_\mu$. In light of $\alpha^{\rm{mm}}_\mathrm{LoS}\geq 2$~\cite{S_A_H_WPT_2015}, we find that compared to increasing the mmWave antenna gain $\mathrm{M_B}\mathrm{M_D}$, $\widetilde{\mathcal{H}}_{\mathrm{mm}}^{\mathrm{LoS}}$ grows at a higher speed when increasing the mmWave density $\lambda_\mathrm{mm}$, which highlights the importance of achieving more mmWave BS densification gains for mmWave WPT.

\section{Throughput Analysis}
The previous section has revealed how many small cells are required for obtaining the expected harvested energy. In this section, we characterize the uplink performance in terms of the throughput in the considered networks. We will calculate how many small cells should be deployed in the sub-6 GHz and mmWave tiers, in order to achieve the targeted throughput. Moreover, we will answer how many mmWave cells are needed to outperform a specific sub-6 GHz tier. We assume that in each block time $T$, UEs first harvest RF energy with the time duration $\tau T$ ($0<\tau<1$), and the rest of time is allocated for uplink transmissions
by fully using the harvested energy.

\subsection{Sub-6 GHz Tier} In the sub-6 GHz tier, we  define the ratio between the sub-6 GHz BS density and active $\mu$UE density as $\kappa_{\mu\mathrm{UE}}={\lambda_\mu}/{ \widetilde{\lambda}_{\mu\mathrm{UE}}}$, to characterize the tier scale. For a given active $\mu$UE density, larger $\kappa_{\mu\mathrm{UE}}$ means that the network is denser and the distance between a $\mu$UE and its serving BS is shorter, which brings more BS densification gains. Note that the achievable BS densification gain can help combat the uplink interference without using complicated interference coordination methods such as \cite{WeiFeng_2013_JSAC}. For a given sub-6 GHz BS density, lower $\kappa_{\mu\mathrm{UE}}$ means that more active $\mu$UEs are served, which results in larger uplink interference.

 We first derive the exact expression for the throughput $\mathrm{C}_\mu$ between a typical $\mu$UE and its serving sub-6 GHz BS as
\begin{align}\label{muWave_rate_exact}
\mathrm{C}_\mu=\frac{1 - \tau }{\ln2} \mathrm{BW}_\mu \int_0^\infty \frac{\varphi_\mu(t)}{t} e^{-{\widetilde{\lambda}_{\mu\mathrm{UE}}}^{\frac{-\alpha_\mu}{2}}{\sigma^2}t}  dt,
\end{align}
where  $\mathrm{BW}_\mu$ is the sub-6 GHz bandwidth, $\varphi_\mu(t)$ is given by  \eqref{varphi_t1_1} (see top of this page), in which
$\hbar_\mu=\frac{\tau}{1-\tau} \eta_\mu{{{P_\mu}}} N {\lambda_\mu^{\frac{\alpha_\mu}{2}}}$.
\begin{figure*}[htbp]
\normalsize
\begin{align}\label{varphi_t1_1}
&\varphi_\mu\left(t\right)=\frac{2\pi}{e^{-\pi r_o^2}} \widetilde{I}_\mu(t) \int_{r_o}^\infty  \Big( 1 -  {{e^{ - t\kappa _{\mu {\rm{UE}}}^{\frac{\alpha_\mu}{2}} N \hbar_\mu L^2(r)}}}  \Big)  r{e^{ - \pi {r^2}}}dr
\end{align}
with
\begin{align}\label{I_tilde_112}
\widetilde{I}_\mu(t)=\exp\Big\{-\frac{4 \pi^2}{{e^{-\pi r_o^2}}} \int_{r_o}^\infty \int_{r_o}^\infty \frac{t\hbar_\mu L(y) L\left( x \right)} {{1{\rm{ + }}t\hbar_\mu L(y) L\left( x \right)} } xy e^{-\pi y^2}dy dx\Big\}
\end{align}
\hrulefill
\end{figure*}
\begin{proof}
See Appendix A.
\end{proof}

To further shed light on the effect of $\kappa_{\mu\mathrm{UE}}$ on the throughput, we provide a closed-form lower bound expression for \eqref{muWave_rate_exact}, which is given by
\begin{align}\label{muwave_BS}
& \mathrm{C}_\mu^{\mathrm{L}}=(1 - \tau)\mathrm{BW}_\mu \times \nonumber\\
&~~~\log_2\left(1+ \kappa_{\mu\mathrm{UE}}^{\frac{\alpha_\mu}{2}} \frac{ N e^{\alpha_\mu {e^{\pi r_o^2}} \zeta_o}}{2\pi^2 \zeta_1 {e^{\pi r_o^2}} r_o^{2-\alpha_\mu}E_{(\frac{\alpha_\mu}{2})}(r_o^2\pi) }\right),
\end{align}
where $\zeta_o=Ei\left(-r_o^2\pi\right)-2e^{-r_o^2\pi}\ln r_o$, $\zeta_1 =\frac{r_o^{2-\alpha_\mu}}{\alpha_\mu-2}$, and $Ei(\cdot)$ is the exponential integral function~\cite[(8.211)]{gradshteyn}.

\begin{proof}
See Appendix B.
\end{proof}

It is explicitly shown from \eqref{muwave_BS} that the throughput increases with $\kappa_{\mu\mathrm{UE}}$ and $N$, which means that adding more small cells or antennas enhances the performance. Also,
for large $\kappa_{\mu\mathrm{UE}}$ or $N$, the throughput scales
as $(1 - \tau)\mathrm{BW}_\mu \left({\frac{\alpha_\mu}{2}} \log_2\left(\kappa_{\mu\mathrm{UE}}\right)+\log_2\left(N\right)\right)$,
which indicates that the throughput grows at a higher speed when increasing $\kappa_{\mu\mathrm{UE}}$,
compared to increasing the number of BS antennas\footnote{Note that the pathloss exponent $\alpha_\mu\geq 2$ in the practical environments~\cite{S_A_H_WPT_2015}.}. %In addition, we have the following corollary.

\begin{corollary}
An expected throughput $C_{\mathrm{th}}^\mu$ is achievable when the ratio between the sub-6 GHz BS density and active $\mu$UE density
satisfies
\begin{align}\label{kappa_123}
\hspace{-0.2 cm}\kappa_{\mu\mathrm{UE}} \geq N^{-\frac{2}{\alpha_\mu}}   \Big( \frac{\big({{\rm{2}}^{\frac{{\rm{C}}_{\mathrm{th}}^\mu}{\mathrm{BW}_\mu(1 - \tau)}}} - 1\big)2\pi^2 \zeta_1 {e^{\pi r_o^2}} r_o^{2-\alpha_\mu}E_{(\frac{\alpha_\mu}{2})}(r_o^2\pi)} {e^{\alpha_\mu {e^{\pi r_o^2}} \zeta_o}}\Big)^{\frac{2}{\alpha_\mu}}.
\end{align}
\end{corollary}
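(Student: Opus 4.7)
The plan is to invoke the closed-form lower bound $\mathrm{C}_\mu^{\mathrm{L}}$ given in \eqref{muwave_BS} and then solve the inequality $\mathrm{C}_\mu^{\mathrm{L}} \geq C_{\mathrm{th}}^\mu$ for $\kappa_{\mu\mathrm{UE}}$. Because $\mathrm{C}_\mu \geq \mathrm{C}_\mu^{\mathrm{L}}$ by construction, any $\kappa_{\mu\mathrm{UE}}$ that makes the lower bound meet the target also makes the true throughput meet the target, so this is a sufficient condition. The derivation is purely algebraic and requires no new probabilistic argument beyond what the lower bound already supplies.

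Concretely, I would first write $\mathrm{C}_\mu^{\mathrm{L}} \geq C_{\mathrm{th}}^\mu$ and divide both sides by $(1-\tau)\mathrm{BW}_\mu$, yielding a condition of the form $\log_2(1 + \kappa_{\mu\mathrm{UE}}^{\alpha_\mu/2}\, Q) \geq C_{\mathrm{th}}^\mu / ((1-\tau)\mathrm{BW}_\mu)$, where
\begin{equation*}
Q = \frac{ N e^{\alpha_\mu e^{\pi r_o^2} \zeta_o}}{2\pi^2 \zeta_1 e^{\pi r_o^2} r_o^{2-\alpha_\mu}E_{(\alpha_\mu/2)}(r_o^2\pi)}
\end{equation*}
collects all the constants that multiply $\kappa_{\mu\mathrm{UE}}^{\alpha_\mu/2}$ inside the $\log_2$ in \eqref{muwave_BS}. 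Because $\log_2(\cdot)$ is strictly increasing, this is equivalent to $\kappa_{\mu\mathrm{UE}}^{\alpha_\mu/2}\, Q \geq 2^{C_{\mathrm{th}}^\mu/((1-\tau)\mathrm{BW}_\mu)} - 1$.

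Next, I would divide by $Q$ and raise both sides to the power $2/\alpha_\mu$, which is monotone increasing since $\alpha_\mu > 0$. The factor $N$ inside $Q$ pulls out as $N^{-2/\alpha_\mu}$, and the remaining constants assemble exactly into the bracketed expression in \eqref{kappa_123}, completing the derivation.

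The only minor subtlety is to note that \eqref{muwave_BS} is a lower bound on $\mathrm{C}_\mu$, so the resulting condition on $\kappa_{\mu\mathrm{UE}}$ is sufficient rather than necessary; this should be stated explicitly to avoid confusion. Otherwise there is no substantive obstacle — all of the heavy lifting has been done in establishing \eqref{muwave_BS} (Appendix B), and this corollary is simply inverting that monotone closed-form relationship in $\kappa_{\mu\mathrm{UE}}$.
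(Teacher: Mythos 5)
Your proposal is correct and is exactly the argument the paper intends: Corollary 4 is stated without a separate proof precisely because it is the monotone inversion of the closed-form lower bound \eqref{muwave_BS} in $\kappa_{\mu\mathrm{UE}}$, and your algebra (isolating the constant $Q$, applying $2^{(\cdot)}$, then raising to the power $2/\alpha_\mu$ and pulling out $N^{-2/\alpha_\mu}$) reproduces \eqref{kappa_123} exactly. Your remark that the condition is sufficient rather than necessary, since it is derived from a lower bound on $\mathrm{C}_\mu$, is also the right reading of the word ``achievable'' in the statement.
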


From \textbf{Corollary 4},  the ratio between the sub-6 GHz BS density and the active $\mu$UE density should be larger than a critical value for obtaining the desired performance.

\subsection{MmWave Tier} In the mmWave tier, we similarly define the ratio between the mmWave BS density and the active mmUE density as $\kappa_{\mathrm{mmUE}}={\lambda_\mathrm{mm}}/{ \widetilde{\lambda}_{\mathrm{mmUE}}}$. The throughput can be derived as
\begin{align}\label{mmWave_rate_2016}
C_\mathrm{mm}=\frac{\left(1-\tau\right)}{\ln2} \mathrm{BW}_\mathrm{mm} \int_0^\infty \frac{\varphi_{\mathrm{mm}}(t)}{t} e^{-{\sigma^2}t}  dt,
\end{align}
where $\mathrm{BW}_\mathrm{mm}$ denotes the mmWave bandwidth, $\hbar_\mathrm{mm}=\frac{\tau}{1-\tau}\eta_\mathrm{mm}{{{P_\mathrm{mm}}}} \mathrm{M_\mathrm{B}}\mathrm{M_\mathrm{D}} $, and $\varphi_{\mathrm{mm}}(t)$ is given by \eqref{varphi_t1_mmWave1} at the next page.

\begin{figure*} [t!]
\normalsize
\begin{align}\label{varphi_t1_mmWave1}
&\varphi_\mathrm{mm}\left(t\right)=\frac{2\pi \lambda_\mathrm{mm} }{e^{-\pi \lambda_\mathrm{mm} r_o^2}}\widetilde{I}_\mathrm{mm}(t) \int_{r_o}^{R_\mathrm{LoS}} \Big( 1 -  {{e^{ - t \hbar_\mathrm{mm} \mathrm{M}_D \mathrm{M}_B (\beta^{\rm{mm}}_\mathrm{LoS})^2 r^{-2\alpha^{\rm{mm}}_\mathrm{LoS}}}}}\Big)    r \exp\left(-\pi \lambda_\mathrm{mm} r^2 \right) dr,
\end{align}
where
\begin{align}\label{mmWave_interfernce_average}
\widetilde{I}_\mathrm{mm}(t)=\exp \left\{ { - 2\pi  {\lambda_\mathrm{mm}} \kappa_{\mathrm{mmUE}}^{-1} \int_{r_o}^{R_\mathrm{LoS}} {\left(1-\phi\left(y\right)\right) y} dy} \right\}
\end{align}
with
\begin{align*}
\phi\left(y\right)=\frac{2\pi \lambda_\mathrm{mm} }{e^{-\pi \lambda_\mathrm{mm} r_o^2}} \sum\limits_{\ell  \in \left\{ {{\mathrm{M_B}},{\mathrm{m_B}}} \right\}} \sum\limits_{k \in \left\{ {{\mathrm{M_D}},{\mathrm{m_D}}} \right\}} \mathrm{Pr}_{\ell k}  \int_{r_o}^{R_\mathrm{LoS}} e^{-t\hbar_\mathrm{mm} \ell k (\beta^{\rm{mm}}_\mathrm{LoS})^2  (zy)^{-2\alpha^{\rm{mm}}_\mathrm{LoS}} }
z \exp\left(-\pi \lambda_\mathrm{mm} z^2 \right) dz
\end{align*}
\hrulefill
\end{figure*}
\begin{proof}
See Appendix C.
\end{proof}

 In the wireless powered mmWave tier, the transmit power of a mmUE is much lower due to the limited harvested energy resulting from the energy loss of propagation and the RF-to-DC conversion, which means that it is more likely to be noise-limited in the wireless powered uplink mmWave tier. As such, it is necessary to analyze the throughput in the noise-limited mmWave scenario, where uplink interference is negligible. Note that such analysis can also be viewed as a tight upper bound of the exact throughput given by \eqref{mmWave_rate_2016}, and has a good approximation to \eqref{mmWave_rate_2016} in the practical urban settings~\cite{Singh_2015}. Therefore, the throughput expression in \eqref{mmWave_rate_2016} can be further simplified  as
 \begin{align}\label{snr_mmwave_rate}
& C_\mathrm{mm}^{\rm n}={\left(1-\tau\right)} \mathrm{BW}_\mathrm{mm}  \times \nonumber\\
&\int_{r_o}^{R_\mathrm{LoS}} \log_2\left(1+\hbar_\mathrm{mm}(\mathrm{M}_D \mathrm{M}_B \beta _{{\rm{LoS}}}^{{\rm{mm}}})^2\frac{r^{-2 \alpha _{{\rm{LoS}}}^{{\rm{mm}}}}}{
{\sigma^2}}\right) \widetilde{f}_{\left| {{Y}} \right|} \left( r \right) dr,
 \end{align}
 where $\widetilde{f}_{\left| {{Y}} \right|} \left( \cdot \right)=\frac{2\pi \lambda_\mathrm{mm} r}{e^{-\pi \lambda_\mathrm{mm} r_o^2}}\exp\left(-\pi \lambda_\mathrm{mm} r^2 \right)$  is the modified PDF of the distance $\left| {{Y}} \right|$ between a mmUE and its nearest mmWave BS under the constraint $\left| {{Y}} \right| \geq r_o$.

 We next derive a closed-form lower bound expression for \eqref{snr_mmwave_rate} as
 \begin{align}\label{NL_mmWave1}
 C_\mathrm{mm}^{\rm L}={\left(1-\tau\right)} \mathrm{BW}_\mathrm{mm}  \log_2\left(1+\frac{e^{\widetilde{\varphi}_\mathrm{mm}(\lambda_\mathrm{mm})} }{
{\sigma^2}}\right).
 \end{align}
In \eqref{NL_mmWave1}, ${\widetilde{\varphi}_\mathrm{mm}(\lambda_\mathrm{mm})}$ is an increasing function of $\lambda_\mathrm{mm}$, which is given by
 \begin{align}\label{varphi_123}
{\widetilde{\varphi}_\mathrm{mm}(\lambda_\mathrm{mm})}=& \Big(\ln(\hbar_\mathrm{mm}\mathrm{M}_D \mathrm{M}_B)+2\big( {\ln \beta _{{\rm{LoS}}}^{{\rm{mm}}} + \frac{{\alpha _{{\rm{LoS}}}^{{\rm{mm}}}}}{2}\ln \big( {\pi \lambda_\mathrm{mm} } \big)} \big)\Big) \nonumber\\
&\times \Big( {{e^{ - {a_1}}} - {e^{ - {b_1}}}} \Big)- \alpha _{{\rm{LoS}}}^{{\rm{mm}}}\Big( Ei\left( { - {b_1}} \right) \nonumber\\
&+\Gamma \left( {0,{a_1}} \right) + {e^{ - a_1}\ln \left( {{a_1}} \right) - {e^{ - {b_1}}}\ln {b_1}} \Big)
 \end{align}
with $a_1=\pi {\lambda _{\rm mm}}r_o^2$ and $b_1=\pi {\lambda _{\rm mm}}R_{{\rm{LoS}}}^2$.

\begin{proof}
See Appendix D.
\end{proof}

\begin{corollary}
Based on \eqref{NL_mmWave1}, we find that the expected throughput $C_{\mathrm{th}}^\mathrm{mm}$ can be achieved when mmWave density $\lambda_\mathrm{mm}$ satisfies the following  condition:
\begin{align}\label{lambda_mmWave_123}
\lambda_\mathrm{mm} \geq  {\widetilde{\varphi}_\mathrm{mm}^{-1}\left(\ln \sigma^2\left( {2^{\frac{C_{\mathrm{th}}^\mathrm{mm}}{{\left(1-\tau\right)} \mathrm{BW}_\mathrm{mm}}}-1}\right) \right)},
\end{align}
where ${\widetilde{\varphi}_\mathrm{mm}^{-1}}$ is the inverse function of $\widetilde{\varphi}_\mathrm{mm}$.
\end{corollary}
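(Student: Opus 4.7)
The plan is to simply invert the closed-form lower bound in \eqref{NL_mmWave1} for $\lambda_\mathrm{mm}$. Since $C_\mathrm{mm}^{\rm L}$ is a lower bound on the noise-limited throughput $C_\mathrm{mm}^{\rm n}$ (which itself is a tight approximation of $C_\mathrm{mm}$), any $\lambda_\mathrm{mm}$ that makes $C_\mathrm{mm}^{\rm L} \geq C_\mathrm{th}^\mathrm{mm}$ is automatically sufficient to guarantee the target throughput. So the proposition reduces to solving the scalar inequality
\begin{equation*}
(1-\tau)\mathrm{BW}_\mathrm{mm}\,\log_2\!\left(1+\frac{e^{\widetilde{\varphi}_\mathrm{mm}(\lambda_\mathrm{mm})}}{\sigma^2}\right) \;\geq\; C_\mathrm{th}^\mathrm{mm}.
\end{equation*}

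The steps I would carry out, in order, are routine: first, divide by $(1-\tau)\mathrm{BW}_\mathrm{mm}$ and exponentiate base $2$ to remove the logarithm, obtaining $1 + e^{\widetilde{\varphi}_\mathrm{mm}(\lambda_\mathrm{mm})}/\sigma^2 \geq 2^{C_\mathrm{th}^\mathrm{mm}/((1-\tau)\mathrm{BW}_\mathrm{mm})}$; second, rearrange to isolate the exponential as $e^{\widetilde{\varphi}_\mathrm{mm}(\lambda_\mathrm{mm})} \geq \sigma^2\bigl(2^{C_\mathrm{th}^\mathrm{mm}/((1-\tau)\mathrm{BW}_\mathrm{mm})}-1\bigr)$; and third, take the natural logarithm on both sides, giving $\widetilde{\varphi}_\mathrm{mm}(\lambda_\mathrm{mm}) \geq \ln\!\bigl[\sigma^2(2^{C_\mathrm{th}^\mathrm{mm}/((1-\tau)\mathrm{BW}_\mathrm{mm})}-1)\bigr]$.

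The final step is to apply $\widetilde{\varphi}_\mathrm{mm}^{-1}$ to both sides. This is where I would invoke the monotonicity observation already asserted in the paragraph containing \eqref{varphi_123}, namely that $\widetilde{\varphi}_\mathrm{mm}(\lambda_\mathrm{mm})$ is an increasing function of $\lambda_\mathrm{mm}$. Monotonicity both guarantees that $\widetilde{\varphi}_\mathrm{mm}^{-1}$ is well-defined on the relevant range and ensures that the inequality direction is preserved under inversion, yielding \eqref{lambda_mmWave_123}.

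There is no serious technical obstacle here; the corollary is essentially a one-line algebraic consequence of \eqref{NL_mmWave1}. The only point worth double-checking is that the argument of $\widetilde{\varphi}_\mathrm{mm}^{-1}$ in \eqref{lambda_mmWave_123} lies in the image of $\widetilde{\varphi}_\mathrm{mm}$ over admissible densities (so that the inverse actually produces a valid $\lambda_\mathrm{mm}$); given the explicit form of $\widetilde{\varphi}_\mathrm{mm}$ in \eqref{varphi_123}, this reduces to verifying that $\widetilde{\varphi}_\mathrm{mm}$ grows without bound as $\lambda_\mathrm{mm}\to\infty$, which is immediate from the $\frac{\alpha^{\rm mm}_{\rm LoS}}{2}\ln(\pi\lambda_\mathrm{mm})$ term dominating in the large-$\lambda_\mathrm{mm}$ regime.
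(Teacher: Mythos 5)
Your proposal is correct and follows exactly the route the paper intends: Corollary~5 is stated as a direct inversion of the closed-form lower bound \eqref{NL_mmWave1}, using the monotonicity of $\widetilde{\varphi}_\mathrm{mm}$ established in Appendix~D to justify applying $\widetilde{\varphi}_\mathrm{mm}^{-1}$ while preserving the inequality. Your extra remark about checking that the argument lies in the image of $\widetilde{\varphi}_\mathrm{mm}$ is a sensible refinement the paper omits, but it changes nothing substantive.
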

It is confirmed by \textbf{{Corollary~5}} that the mmWave density needs to be larger than the RHS of \eqref{lambda_mmWave_123}, in order to achieve a targeted throughput.

\begin{corollary}
Based on {\bf Corollary~4}, a $\mu$$\mathrm{UE}$ can achieve a higher throughput than a $\rm{mmUE}$ when $\kappa_{\mu\mathrm{UE}}$ satisfies \eqref{kappa_123} with $C_{\mathrm{th}}^\mu=C_\mathrm{mm}$ given in \eqref{mmWave_rate_2016}. Based on {\bf Corollary~5}, a $\mathrm{mmUE}$ can achieve a higher throughput than a $\mu$$\mathrm{UE}$ when  $\lambda_\mathrm{mm}$ satisfies \eqref{lambda_mmWave_123} with $C_{\mathrm{th}}^\mathrm{mm}=\mathrm{C}_\mu$ given in~\eqref{muWave_rate_exact}.
\end{corollary}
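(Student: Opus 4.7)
The plan is to observe that Corollary~6 is essentially a direct substitution argument built on top of Corollaries~4 and~5, so no new probabilistic or stochastic-geometry analysis is needed. Both halves of the statement simply specialize the two general ``target throughput $\Rightarrow$ density condition'' results to the case where the target in one tier is taken to be the actual throughput in the other tier.

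For the first claim, I would start from Corollary~4, which guarantees that whenever $\kappa_{\mu\mathrm{UE}}$ exceeds the threshold on the right-hand side of \eqref{kappa_123}, the closed-form lower bound $\mathrm{C}_\mu^{\mathrm{L}}$ of \eqref{muwave_BS} already meets or exceeds any prescribed target $C_{\mathrm{th}}^\mu$. Substituting $C_{\mathrm{th}}^\mu = C_\mathrm{mm}$, with $C_\mathrm{mm}$ given by \eqref{mmWave_rate_2016}, then yields $\mathrm{C}_\mu \geq \mathrm{C}_\mu^{\mathrm{L}} \geq C_\mathrm{mm}$, which is precisely the desired superiority of the $\mu$UE throughput over the mmUE throughput.

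For the second claim, I would mirror the argument using Corollary~5, which states that $C_\mathrm{mm}^{\rm L} \geq C_{\mathrm{th}}^\mathrm{mm}$ whenever $\lambda_\mathrm{mm}$ satisfies \eqref{lambda_mmWave_123}. Setting $C_{\mathrm{th}}^\mathrm{mm} = \mathrm{C}_\mu$, with $\mathrm{C}_\mu$ given by \eqref{muWave_rate_exact}, and chaining with the noise-limited upper-bound relation $C_\mathrm{mm} \geq C_\mathrm{mm}^{\rm L}$, immediately delivers $C_\mathrm{mm} \geq \mathrm{C}_\mu$.

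The only subtlety worth flagging, and certainly not a genuine obstacle, is that Corollaries~4 and~5 were derived from the closed-form lower bounds \eqref{muwave_BS} and \eqref{NL_mmWave1} rather than from the exact expressions \eqref{muWave_rate_exact} and \eqref{mmWave_rate_2016}. The substitutions therefore produce sufficient (and possibly conservative) conditions rather than tight ones; in both directions the inequality is valid because the lower bound of the ``winning'' tier already dominates the exact throughput of the ``losing'' tier. Once this lower-bound chain is invoked, each direction of Corollary~6 follows in essentially one line.
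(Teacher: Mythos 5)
Your proposal is correct and matches the paper exactly: the paper offers no separate proof of Corollary~6, treating it as an immediate substitution of $C_{\mathrm{th}}^\mu=C_\mathrm{mm}$ into Corollary~4 and $C_{\mathrm{th}}^\mathrm{mm}=\mathrm{C}_\mu$ into Corollary~5, which is precisely your argument. Your caveat about the lower bounds is apt, though note that the step you call the ``noise-limited upper-bound relation'' actually runs the other way --- $C_\mathrm{mm}^{\rm L}$ lower-bounds the noise-limited throughput $C_\mathrm{mm}^{\rm n}$, which is an \emph{upper} bound on $C_\mathrm{mm}$, so $C_\mathrm{mm}\geq C_\mathrm{mm}^{\rm L}$ holds only under the noise-limited approximation that the paper itself adopts (and validates by simulation), not as a strict inequality.
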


\textbf{{Corollary~6}} confirms that whether the sub-6 GHz tier performs better than the mmWave tier depends on the number of sub-6 GHz small cells in the 5G networks.

\section{Simulation Results}
In this section, we present simulation results to show the energy coverage and throughput performance in wireless powered 5G dense cellular networks. The results validate the prior analysis, and further illustrate the impacts of node density on the RF energy harvesting and information transmission. The basic simulation parameters are shown in Table~\ref{table:1}.
\begin{table}
\newcommand{\tabincell}[2]{\begin{tabular}{@{}#1@{}}#2\end{tabular}}\scriptsize
\centering
\caption{Simulation Parameters}
 \begin{tabular}{|l|c|}
\hline
\textbf{Parameter} & \textbf{Value}   \\
\hline
 MmWave carrier frequency & $f_\mathrm{mm}=28$ GHz \\
\hline
RF-to-DC conversion efficiency & $\eta_\mu=\eta_\mathrm{mm}=0.6$\\
\hline
BS transmit power & $P_\mu=P_\mathrm{mm}=30$ dBm \\
\hline
Reference distance & $r_o=1$ \\
\hline
MmWave pathloss exponent & $\alpha^{\rm{mm}}_\mathrm{LoS}=2$, $\alpha^{\rm{mm}}_\mathrm{NLoS}=2.9$\\
\hline
\end{tabular}
\label{table:1}
\end{table}

\subsection{Energy Coverage}
In this subsection, we study energy coverage in the sub-6 GHz and mmWave tiers. It is assumed that the LoS probability function is $f_\mathrm{Pr}\left(R\right)=e^{-\varrho R}$ with $1/{\varrho}= 141.4$ m~\cite{Tianyang_arxiv2014}, the sub-6 GHz carrier frequency is  $f_c=1.5$ GHz, and the mmWave antenna beam pattern at a mmWave BS and mmUE are $(\mathrm{M_B}, \mathrm{m_B}, \theta_\mathrm{B})=(18~\mathrm{dB}, -2~\mathrm{dB}, 10^{o})$ and $(\mathrm{M_D}, \mathrm{m_D}, \theta_\mathrm{D})=(10~\mathrm{dB}, -10~\mathrm{dB}, 45^{o})$, respectively.

Fig.~\ref{fig1} shows the directed transferred energy coverage probability results in the sub-6 GHz and mmWave tiers. The analytical results are obtained from \eqref{mu_BS_approx} and \eqref{mmWave_pf1}, respectively, which are validated by Monte Carlo simulations. The result in \eqref{mu_BS_approx} can  predict the energy coverage of the sub-6 GHz tier. MmWave power transfer can be better than the sub-6 GHz  counterpart, due to  the mmWave directivity gain and densification gain.
\begin{figure}[htbp]
\centering
\includegraphics[width=3.5 in,height=2.9 in]{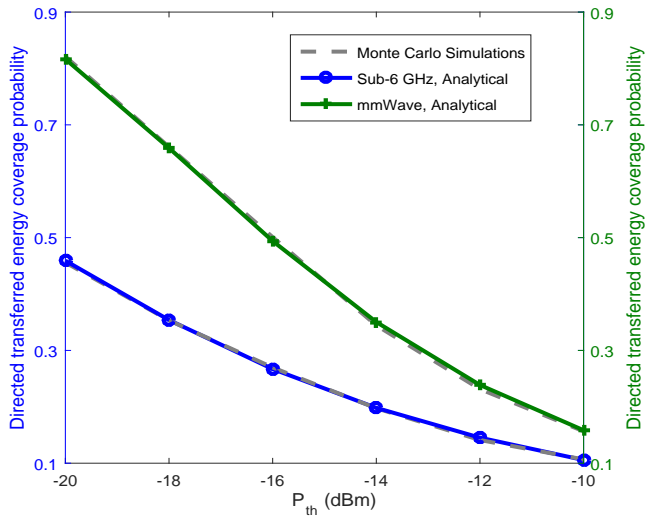}
\caption{Directed transferred energy coverage probability with $N=32$, $\alpha_\mu=2.7$, $\lambda_\mu=0.002$ and $\lambda_\mathrm{mm}=0.02$.}
\label{fig1}
\end{figure}

\begin{figure}[htbp]
\centering
\includegraphics[width=3.5 in,height=2.9 in]{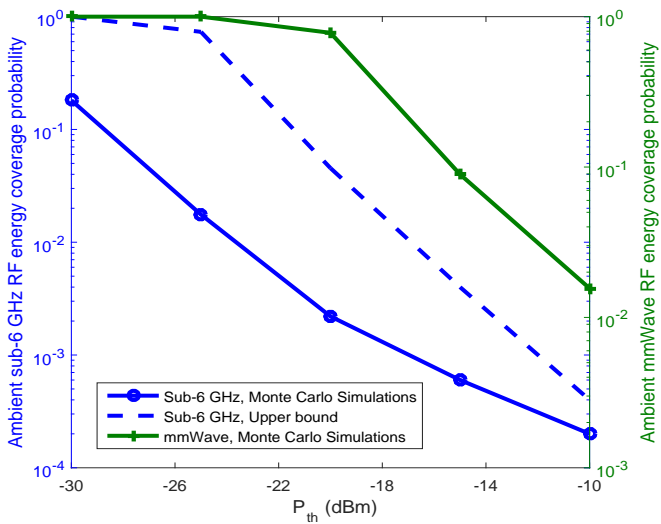}
\caption{Ambient RF energy coverage probability with $\alpha_\mu=2.6$, $\lambda_\mu=0.002$ and $\lambda_\mathrm{mm}=0.5$.}
\label{fig2}
\end{figure}

Fig.~\ref{fig2} shows the ambient RF energy coverage probability results in the sub-6 GHz and mmWave tiers. We observe that for ultra-dense mmWave tier, the harvested ambient mmWave RF energy can still be larger than that in the sub-6 GHz tier with comparably lower BS density. Compared to Fig.~\ref{fig1}, it is indicated that the amount of harvested energy from the ambient RF is much lower than the directed power transfer.

\begin{figure}[t!]
\centering
\includegraphics[width=3.5 in,height=2.9 in]{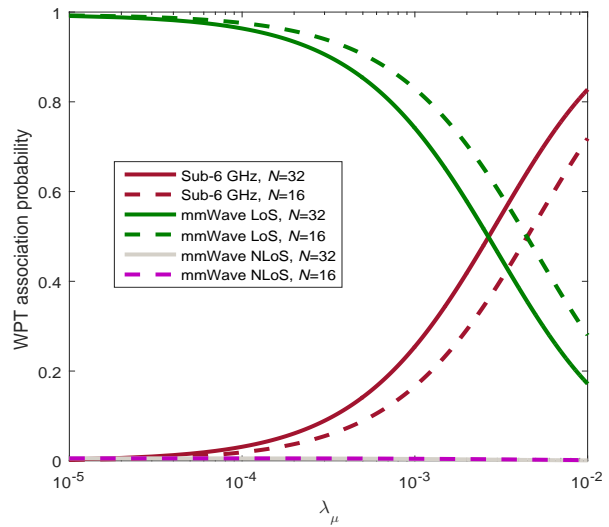}
\caption{WPT association probability with $\alpha_\mu=2.7$ and $\lambda_\mathrm{mm}=0.01$.}
\label{fig3}
\end{figure}

Fig.~\ref{fig3} demonstrates the association probability that a UE  selects a sub-6 GHz BS or mmWave BS in hybrid cellular networks. The results are obtained based on \textbf{Proposition 1} and \textbf{Proposition 2}. We observe that the association probability for sub-6 GHz WPT increases with the sub-6 GHz BS density because of obtaining higher densification gains, and it will also be improved by adding sub-6 GHz antennas for achieving more antenna gains. More UEs will select mmWave BSs to transfer energy when the sub-6 GHz BS density is much lower than the mmWave BS density, which implies that dense small cells are needed to shorten the energy transfer distance between the UE and its associated BS. Compared to the mmWave LoS, the association probability that a UE selects a mmWave NLoS power transfer is negligible.

\subsection{Throughput}
Here, we study the impact of the ratio between the BS density and the active UE density on the throughput. In the simulations, the energy harvesting time allocation factor is $\tau=0.7$, the sub-6 GHz carrier frequency is  $f_c=1$ GHz, the sub-6 GHz pathloss exponent is $\alpha_\mu=2.6$, the sub-6 GHz bandwidth is $\mathrm{BW}_\mu=20$ MHz, the mmWave bandwidth is $\mathrm{BW}_{\rm mm}=1$ GHz, and the mmWave antenna beam pattern at an active mmUE and a mmWave BS are $(\mathrm{M_D}, \mathrm{m_D}, \theta_\mathrm{D})=(3~\mathrm{dB}, -3~\mathrm{dB}, 45^{o})$ and $(\mathrm{M_B}, \mathrm{m_B}, \theta_\mathrm{B})=(18~\mathrm{dB}, -2~\mathrm{dB}, 10^{o})$, respectively, and the maximum LoS distance is $R_\mathrm{LoS}=20$ m. The noise power is obtained by  $\sigma^2 = -174 + 10\log 10$(BW)$ + \mathrm{Nf}$ dBm with $7$ dB noise figure (Nf).

\begin{figure}[htbp]
\centering
\includegraphics[width=3.5 in,height=2.9 in]{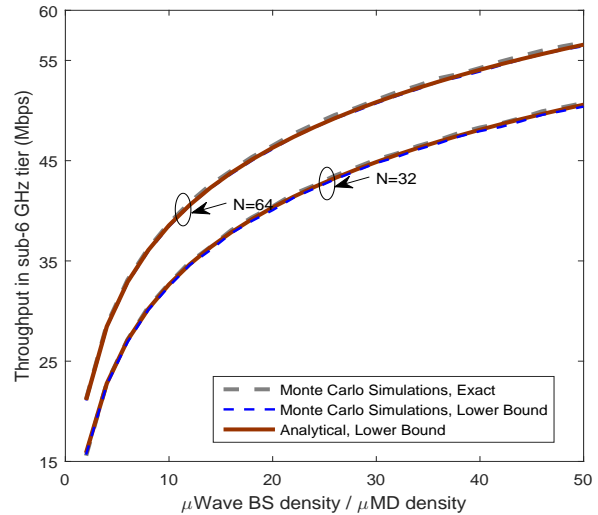}
\caption{Throughput in sub-6 GHz tier with ${ \widetilde{\lambda}_{\mu\mathrm{UE}}}=0.001$.}
\label{fig4}
\end{figure}

Fig.~\ref{fig4} shows the throughput in the sub-6 GHz tier. The analytical lower bound curves are obtained from \eqref{muwave_BS}, which tightly matches with the simulated exact curves. We see that deploying more sub-6 GHz small cells can significantly increase the throughput, due to the densification gain. Adding more BS antennas can further enhance the spectrum efficiency and bring an increase in throughput. It is also indicated from Fig.~\ref{fig4} that given a specific throughput level, the required number of small cells can be cut by using more antennas at each BS.

\begin{figure}[htbp]
\centering
\includegraphics[width=3.5 in,height=2.9 in]{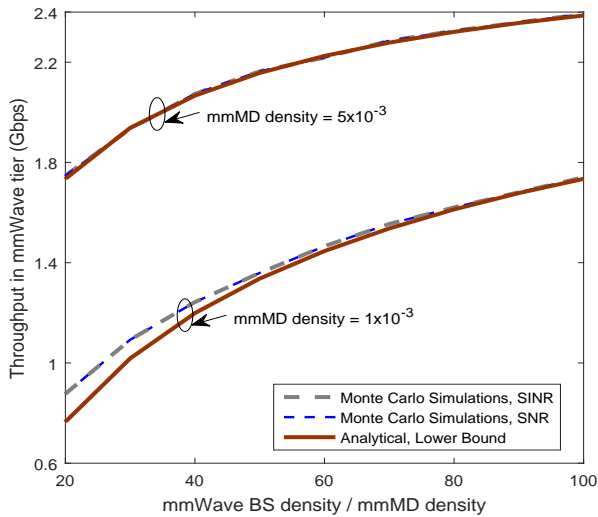}
\caption{Throughput in mmWave tier.}
\label{fig5}
\end{figure}

Fig.~\ref{fig5} illustrates the throughput in the mmWave tier. The simulated throughput curves based on the SINR has a good match with that based on SNR, confirming that the wireless powered uplink mmWave tier is noise-limited. The analytical lower bounds are obtained from \eqref{NL_mmWave1}, which can well approximate the simulated exact curves. We find that adding more sub-6 GHz small cells has a  substantial increase in throughput.

\begin{figure}[htbp]
\centering
\includegraphics[width=3.5 in,height=2.9 in]{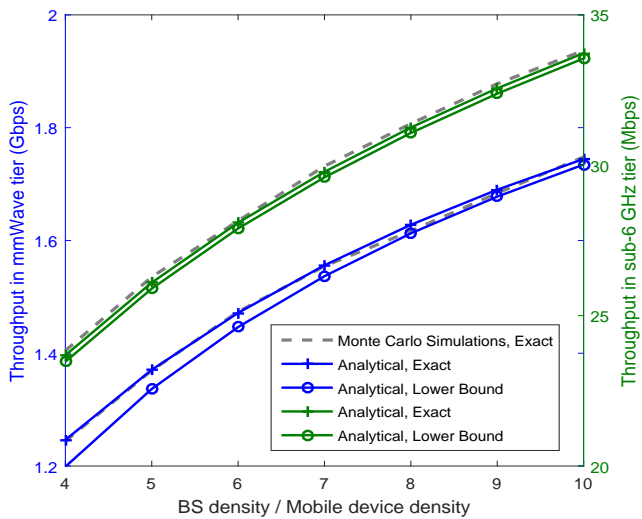}
\caption{Throughput comparison between mmWave tier and sub-6 GHz tier for the same tier scale with ${ \widetilde{\lambda}_{\mu\mathrm{UE}}}={ \widetilde{\lambda}_{\mathrm{mmUE}}}=0.01$.}
\label{fig6}
\end{figure}

Fig.~\ref{fig6} compares the throughput between the mmWave tier and the sub-6 GHz tier for the same scale, i.e., same numbers of BSs and active UEs. The analytical exact and lower bounds of the mmWave tier are obtained from \eqref{mmWave_rate_2016} and \eqref{NL_mmWave1}, respectively. The analytical exact and lower bounds of the sub-6 GHz tier are obtained from \eqref{muWave_rate_exact} and \eqref{muwave_BS}, respectively. Our analysis is validated by the simulated results. It is implied that the Gbps transmission rate is still likely to be achieved in the wireless powered dense mmWave tier, which significantly outperforms the sub-6 GHz tier. Moreover, in the wireless powered ultra-dense mmWave scenarios, interference is still negligible, i.e., noise-limited.
\begin{figure}[htbp]
\centering
\includegraphics[width=3.5 in,height=2.9 in]{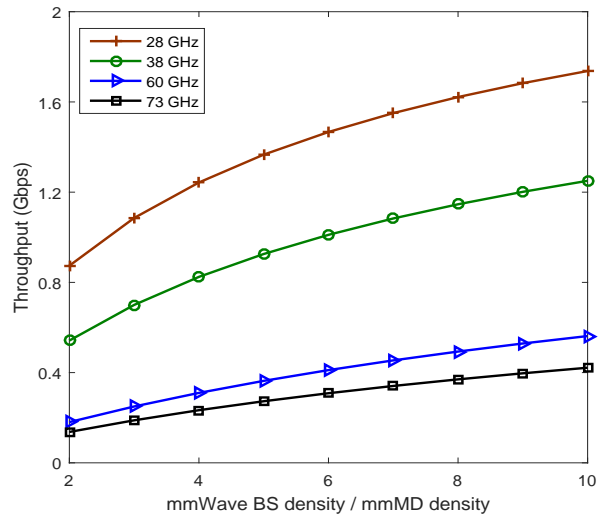}
\caption{Throughput comparison for different mmWave carrier frequencies with ${ \widetilde{\lambda}_{\mathrm{mmUE}}}=0.01$.}
\label{fig7}
\end{figure}

Fig.~\ref{fig7} provides the throughput comparison for different mmWave carrier frequencies. The LoS pathloss exponent is set as 2 at 28 GHz, 38 GHz and 73 GHz, and its value is 2.25 at 60 GHz~\cite{deng201528,rappaport201238}. We observe that the achievable throughput is the highest at 28 GHz since it has the lowest propagation loss. The performance gap between different mmWave  carrier frequencies is larger when increasing the BS density, which indicates that lower mmWave frequencies can obtain more densification gains. In addition, the performance at 73 GHz is close to that at 60 GHz, due to the fact that the atmospheric absorption at 60 GHz is more severe than that at 73 GHz, which leads to higher LoS pathloss exponent at 60 GHz.

\section{Conclusions}
This paper studied WPT in dense cellular networks, where a large number of sub-6 GHz and mmWave BSs with antenna arrays are deployed to power UEs. The expressions of the energy coverage probability were derived, in order to provide comparisons between the sub-6 GHz and mmWave energy harvesting. We obtained the BS density condition when the mmWave tier can provide more RF energy than the sub-6 GHz tier. In addition, the probability that a UE selects the sub-6 GHz or mmWave power transfer was quantified. We then derived the throughput in the uplink sub-6 GHz and mmWave tiers. We obtained the number of small cells that are required to achieve a targeted level of throughput. Also, we presented the BS density conditions when the mmWave UE achieves higher throughput than the sub-6 GHz UE.

\section*{Appendix A: A detailed derivation of \eqref{muWave_rate_exact}}
\label{App:theo_1}
\renewcommand{\theequation}{A.\arabic{equation}}
\setcounter{equation}{0}

According to the Mapping theorem~\cite{Baccelli2009}, \eqref{muWave_SINR_original} can be written as
\begin{align}\label{muWave_SINR}
\mathrm{SINR}_{\mu}&= \frac{\lambda_\mu^{\frac{\alpha_\mu}{2}} S_{\left(1 \right)}^\mu }{{\widetilde{\lambda}_{\mu\mathrm{UE}}}^{\frac{\alpha_\mu}{2}}
 I_{\left(1\right)}^\mu+{\sigma^2}}.
%&\mathop {\approx } \limits^{\left(b\right)} {\left({\frac{\lambda_\mu}{\widetilde{\lambda}_{\mu\mathrm{UE}}}}\right)^{\frac{\alpha_\mu}{2}}}\frac{ P_{\mathrm{MD}_o}^{\mu} \left\|\mathbf{h}_o\right\|^2 L\left({\left| {{X_{{o}}}} \right|}\right)}{\sum\limits_{i  \in {\widetilde{\Phi}_{\mu\mathrm{MD}\setminus \left\{o\right\}}\left(1\right)}} {{P_{\mathrm{MD}_i}^{\mu}} \left|{\frac{\mathbf{h}_o^\dagger}{\left\|\mathbf{h}_o\right\|} \mathbf{g}_i }\right|^2 L\left( {\left| {{X_{i ,\mu}}} \right|}\right)}}
\end{align}
Based on \eqref{muWave_SINR}, the throughput $\mathrm{C}_\mu$ can be derived by using \cite[Lemma 1]{Hamdi_2008}, which is as follows
\begin{align}\label{muWave_exact1}
\mathrm{C}_\mu&=\frac{(1 - \tau)T  }{T}\mathrm{BW}_\mu\mathbb{E}\left[\log_2\left(1+\frac{\kappa_{\mu\mathrm{UE}}^{\frac{\alpha_\mu}{2}} S_{\left(1 \right)}^\mu }{
 I_{\left(1\right)}^\mu+{\widetilde{\lambda}_{\mu\mathrm{UE}}}^{\frac{-\alpha_\mu}{2}}{\sigma^2}}\right)\right]   \nonumber\\
&=\frac{1 - \tau}{\ln2} \mathrm{BW}_\mu \int_0^\infty \frac{\varphi_\mu(t)}{t} e^{-{\widetilde{\lambda}_{\mu\mathrm{UE}}}^{\frac{-\alpha_\mu}{2}}{\sigma^2}t}  dt,
\end{align}
where $\varphi_\mu(t)$ is
\begin{align}\label{muWave_exact1}
\varphi_\mu(t)&=\mathbb{E}\left[\left( {1 - {e^{ - t \kappa _{\mu {\rm{UE}}}^{\frac{\alpha_\mu}{2}} S_{\left( 1 \right)}^\mu }}} \right){e^{ - tI_{\left( 1 \right)}^\mu }}\right] \nonumber\\
&\approx  \int_{r_o}^\infty  \Big( 1 -  {{e^{ - t \kappa _{\mu {\rm{UE}}}^{\frac{\alpha_\mu}{2}} N P_{\mathrm{UE}_o}^{\mu} L(r)}}}  \Big) \underbrace{\mathbb{E}\left[ {{e^{ - tI_{\left( 1 \right)}^\mu }}} \right]}_{\widetilde{I}_\mu(t)} \widetilde{f}_{\left| {{X}} \right|} \left( r \right) dr.
\end{align}
In \eqref{muWave_exact1}, $\left\|\mathbf{h}_o\right\|^2 \approx N$ with large $N$, $r_o$ is the reference distance to avoid singularity at zero, and $\widetilde{f}_{\left| {{X}} \right|} \left( \cdot \right)$ is the modified PDF of the distance $\left| {X} \right|$ between a $\mu$UE and its nearest sub-6 GHz BS under the condition $\left| {{X}} \right| \geq r_o$, which is
\begin{align}\label{new_PDF}
\widetilde{f}_{\left| {{X}} \right|} \left( r \right)= \frac{2 \pi r}{e^{-\pi r_o^2}} \exp\left(-\pi r^2\right), \; r \geq r_o.
\end{align}
Note that the harvested ambient RF energy is much smaller than the directed transferred energy~\cite{Kaibin_Huang2015,Lifeng_wang2015_globcom,Robert_heath_mmWave_energy} and can be negligible, which is also illustrated in the simulation results of Section V in this paper. Therefore, based on the instantaneous harvested power given by \eqref{mu_MD_receive_power}, $P_{\mathrm{UE}_o}^{\mu}$ can be evaluated as
\begin{align}\label{varphi_2_2}
P_{\mathrm{UE}_o}^{\mu} \mathop \approx \limits^{(a)}  \hbar_\mu L(r),
\end{align}
where $\hbar_\mu=\frac{\tau}{1-\tau} \eta_\mu{{{P_\mu}}} N {\lambda_\mu^{\frac{\alpha_\mu}{2}}}$, step (a) is obtained from the Mapping theorem. Since the minimum distance between the typical BS and the interfering UEs is small in dense networks~\cite{XinqinLin_2014}, by using the Laplace functional of PPP~\cite{Baccelli2009}, $\widetilde{I}_\mu(t)$ is given by
\begin{align}\label{mmwave_exact_X1}
&\widetilde{I}_\mu(t)=\exp \left( { - 2\pi \int_{r_o}^\infty  {\mathbb{E}\left[\frac{tP_{{\rm{M}}{{\rm{D}}_i}}^\mu L\left( x \right)} {{1{\rm{ + }}tP_{{\rm{M}}{{\rm{D}}_i}}^\mu L\left( x \right)} } \right] x} dx} \right).
\end{align}
According to \eqref{varphi_2_2}, we have
\begin{align}\label{mmwave_exact_X1_1}
&\mathbb{E}\left[\frac{tP_{{\rm{M}}{{\rm{D}}_i}}^\mu L\left( x \right)} {{1{\rm{ + }}tP_{{\rm{M}}{{\rm{D}}_i}}^\mu L\left( x \right)} } \right]
=\frac{2\pi}{e^{-\pi r_o^2}} \int_{r_o}^\infty \frac{t\hbar_\mu L(y) L\left( x \right)} {{1{\rm{ + }}t\hbar_\mu L(y) L\left( x \right)} } y e^{-\pi y^2}dy.
\end{align}
Substituting \eqref{mmwave_exact_X1_1} into \eqref{mmwave_exact_X1}, we obtain  $\widetilde{I}_\mu(t)$ as \eqref{I_tilde_112} and complete the proof.
\section*{Appendix B: A detailed derivation of \eqref{muwave_BS}}
\label{App:theo_1}
\renewcommand{\theequation}{B.\arabic{equation}}
\setcounter{equation}{0}

In the dense cellular networks, the noise power is negligible. Thus, based on \eqref{muWave_SINR}, $ \mathrm{C}_\mu$ can be calculated as
\begin{align}\label{rate_mu_1}
\mathrm{C}_\mu %&=\mathbb{E}\left[\log_2\left(1+\kappa_{\mu\mathrm{UE}}^{\frac{\alpha_\mu}{2}} \frac{ S_{\left(1 \right)}^\mu }{
 %I_{\left(1\right)}^\mu}\right)\right]\nonumber\\
 &\mathop \approx \limits^{(b)}  \frac{(1 - \tau)T  }{T} \mathbb{E}\left[\log_2\left(1+\kappa_{\mu\mathrm{UE}}^{\frac{\alpha_\mu}{2}} N \frac{ \widetilde{S}_{\left(1 \right)}^\mu }{ I_{\left(1\right)}^\mu}\right)\right],
\end{align}
where step (b) is obtained by considering $\left\|\mathbf{h}_o\right\|^2 \approx N$ with large $N$, and $\widetilde{S}_{\left(1 \right)}^\mu=P_{\mathrm{UE}_o}^{\mu}  L\left({\left| {{X_{{o}}}} \right|}\right)$.

By using Jensen's inequality~\cite{N_Lee_2016},   \eqref{rate_mu_1} can be lower bounded as
\begin{align}\label{rate_mu_2}
\mathrm{C}_\mu^{\mathrm{L}}=&(1 - \tau)\log_2\left(1+\kappa_{\mu\mathrm{UE}}^{\frac{\alpha_\mu}{2}} N \frac{e^{\mathbb{E}\left[\ln \widetilde{S}_{\left(1 \right)}^\mu  \right]}}{\mathbb{E}\left[I_{\left(1\right)}^\mu\right]}\right).
\end{align}

Considering  $P_{\mathrm{UE}_o}^{\mu} \approx  \hbar_1 L(r)$ given in \eqref{varphi_2_2}, we first calculate $\mathbb{E}\left[\ln \widetilde{S}_{\left(1 \right)}^\mu  \right]$ as
\begin{align}\label{rate_mu_3}
\mathbb{E}\left[\ln \widetilde{S}_{\left(1 \right)}^\mu  \right]&=\mathbb{E}\left[\ln P_{\mathrm{UE}_o}^{\mu}\right]+ \mathbb{E}\left[\ln  L(r) \right] \nonumber\\
&=\ln \hbar_1 +2\mathbb{E}\left[\ln  L(r) \right],
\end{align}
where $\mathbb{E}\left[\ln  L(r) \right]$ is
\begin{align}\label{rate_mu_4}
&\mathbb{E}\left[\ln  L(r) \right] =\ln \beta_\mu-\alpha_\mu  \int_{r_o}^\infty \ln(r) \widetilde{f}_{\left| {{X}} \right|} \left( r \right) dr \nonumber\\
&=\ln \beta_\mu+\frac{\alpha_\mu}{2}{e^{\pi r_o^2}}\left(Ei\left(-r_o^2\pi\right)-2e^{-r_o^2\pi}\ln r_o\right).
\end{align}
In \eqref{rate_mu_4}, $\widetilde{f}_{\left| {{X}} \right|} \left( \cdot \right)$ is given by \eqref{new_PDF}.

With the help of the Campbell's theorem, $\mathbb{E}\left[I_{\left(1\right)}^\mu\right]$ in \eqref{rate_mu_2} is calculated as
\begin{align}\label{rate_mu_8}
\mathbb{E}\left[I_{\left(1\right)}^\mu\right]&=\mathbb{E}\left[P_{\mathrm{UE}_i}^{\mu}\right] 2\pi  \beta_\mu  \int_{r_o}^\infty x^{1-\alpha_\mu}  dx \nonumber\\
& =\mathbb{E}\left[P_{\mathrm{UE}_i}^{\mu}\right] 2\pi \beta_\mu \frac{r_o^{2-\alpha_\mu}}{\alpha_\mu-2},
\end{align}
where $\mathbb{E}\left[P_{\mathrm{UE}_i}^{\mu}\right]$ can be similarly obtained based on \eqref{varphi_2_2}, which is
\begin{align}\label{rate_mu_9}
\mathbb{E}\left[P_{\mathrm{UE}_i}^{\mu}\right]&\approx \hbar_1 \beta_\mu  \int_{r_o}^\infty r^{-\alpha_\mu} \widetilde{f}_{\left| {{X}} \right|} \left( r \right)  dr \nonumber\\
&=\hbar_1 \beta_\mu \pi {e^{\pi r_o^2}} r_o^{2-\alpha_\mu}E_{(\frac{\alpha_\mu}{2})}(r_o^2\pi).
\end{align}

Substituting \eqref{rate_mu_3} and \eqref{rate_mu_8} into \eqref{rate_mu_2}, we can obtain \eqref{muwave_BS} and complete the proof.

\section*{Appendix C: A detailed derivation of \eqref{mmWave_rate_2016}}
\label{App:theo_1}
\renewcommand{\theequation}{C.\arabic{equation}}
\setcounter{equation}{0}

Similar to \eqref{muWave_exact1}, we have
\begin{align}\label{mmWave_exact1}
\mathrm{C}_\mathrm{mm} &=\frac{(1 - \tau)T  }{T} \mathrm{BW}_\mathrm{mm} \mathbb{E}\left[\log_2\left(1+\mathrm{SINR}_{\mathrm{mm}}\right)\right] \nonumber\\
&= \frac{1 - \tau}{\ln2} \mathrm{BW}_\mathrm{mm} \int_0^\infty \frac{\varphi_{\mathrm{mm}}(t)}{t} e^{-{\sigma^2}t}  dt,
\end{align}
where $\varphi_{\mathrm{mm}}(t)$ is
\begin{align}\label{mmWave_exact11}
\varphi_{\mathrm{mm}}(t)&=\mathbb{E}\Big[\Big( {1 - {e^{ - t S_{\left(\lambda_\mathrm{mm}\right)}^{\mathrm{mm}} }}} \Big){e^{ - tI_{\left(\widetilde{\lambda}_{\mathrm{mmUE}}\right)}^{\mathrm{mm}} }}\Big] \nonumber\\
&=  \int_{r_o}^{R_\mathrm{LoS}} \Big( 1 -  {{e^{ - t P_{\mathrm{UE}_o}^{\mathrm{mm}} \mathrm{M}_D \mathrm{M}_B L(r)}}}\Big) \nonumber\\
&\qquad\qquad \times \underbrace{\mathbb{E}\Big[ {{e^{ - tI_{\left(\widetilde{\lambda}_{\mathrm{mmUE}}\right)}^{\mathrm{mm}} }}} \Big]}_{\widetilde{I}_\mathrm{mm}(t)} \widetilde{f}_{\left| {{Y}} \right|} \left( r \right) dr,
\end{align}
where $P_{\mathrm{UE}_o}^{\mathrm{mm}}=\hbar_\mathrm{mm} L\left(r\right)$, and $\widetilde{f}_{\left| {{Y}} \right|} \left( \cdot \right)$ is the modified PDF of the distance $\left| {{Y}} \right|$ between a mmUE and its nearest mmWave BS under the constraint $\left| {{Y}} \right| \geq r_o$, which is
\begin{align}\label{new_PDF_mmWave}
\widetilde{f}_{\left| {{Y}} \right|} \left( r \right)=\frac{2\pi \lambda_\mathrm{mm} r}{e^{-\pi \lambda_\mathrm{mm} r_o^2}}\exp\left(-\pi \lambda_\mathrm{mm} r^2 \right).
\end{align}

We next calculate  $\widetilde{I}_\mathrm{mm}(t)$  as
\begin{align}\label{I_t_tilde}
&\widetilde{I}_\mathrm{mm}(t) \nonumber\\
&= \exp \left\{ { - 2\pi  \widetilde{\lambda}_{\mathrm{mmUE}}^{\mathrm{mm}} \int_{r_o}^{R_\mathrm{LoS}} {\left(1-\mathbb{E}\left[e^{-tP_{\mathrm{UE}_j}^{\mathrm{mm}} \widetilde{G}_j L\left( y \right)} \right]\right) y} dy} \right\}.
\end{align}
By using the {law} of total expectation, we can directly obtain
\begin{align}\label{averge_exp_I_t}
&\mathbb{E}\left[e^{-tP_{\mathrm{UE}_j}^{\mathrm{mm}} \widetilde{G}_j L\left( y \right)} \right] \nonumber\\
&=\sum\limits_{\ell  \in \left\{ {{\mathrm{M_B}},{\mathrm{m_B}}} \right\}} \sum\limits_{k \in \left\{ {{\mathrm{M_D}},{\mathrm{m_D}}} \right\}} \mathrm{Pr}_{\ell k} {\mathbb{E}_{P_{\mathrm{UE}_j}^{\mathrm{mm}}}\left[e^{-tP_{\mathrm{UE}_j}^{\mathrm{mm}} \ell k  L\left( y \right)}\right]} \nonumber\\
&\hspace{-0.3 cm}=\sum\limits_{\ell  \in \left\{ {{\mathrm{M_B}},{\mathrm{m_B}}} \right\}} \sum\limits_{k \in \left\{ {{\mathrm{M_D}},{\mathrm{m_D}}} \right\}} \mathrm{Pr}_{\ell k}  \int_{r_o}^{R_\mathrm{LoS}} e^{-t\hbar_\mathrm{mm} L\left(z\right) \ell k  L\left( y \right)} \widetilde{f}_{\left| {{Y}} \right|} \left( z \right) dz.
\end{align}
Substituting \eqref{new_PDF_mmWave} and \eqref{averge_exp_I_t} into \eqref{I_t_tilde}, after some manipulations, we obtain \eqref{mmWave_interfernce_average}, and complete the proof.

\section*{Appendix D: A detailed derivation of \eqref{NL_mmWave1}}
\label{App:theo_1}
\renewcommand{\theequation}{D.\arabic{equation}}
\setcounter{equation}{0}

 In the noise-limited scenario, the throughput is calculated as
 \begin{align}\label{NL_mmWave2}
& C_\mathrm{mm}^{\rm n}={\left(1-\tau\right)} \mathrm{BW}_\mathrm{mm} \mathbb{E}\left[\log_2\left(1+\frac{S_{\left(\lambda_\mathrm{mm}\right)}^{\mathrm{mm}}}{
{\sigma^2}}\right)\right].
 \end{align}
Considering the convexity of $\log_2(1+a e^x)$ for $a>0$ and  using Jensen's inequality,  the above can be lower bounded as
\begin{align}\label{NL_mmWave3}
& C_\mathrm{mm}^{\rm L}={\left(1-\tau\right)} \mathrm{BW}_\mathrm{mm} \log_2\left(1+\frac{e^{\zeta_3}}{
{\sigma^2}}\right),
\end{align}
where $\zeta_3=\mathbb{E}\left[\ln(S_{\left(\lambda_\mathrm{mm}\right)}^{\mathrm{mm}})\right]$. We can obtain $\zeta_3$ as
\begin{align}\label{zeta_3}
&\zeta_3=\mathbb{E}\left[\ln( P_{\mathrm{UE}_o}^{\mathrm{mm}} \mathrm{M}_D \mathrm{M}_B L\left({\left| {{Y_{{o}}}} \right|}\right)\mathbf{1}\left(\left| {{Y_{{o}}}} \right|< R_\mathrm{LoS}\right)) \right]\nonumber\\
&= \mathbb{E}\left[\ln(\hbar_\mathrm{mm}\mathrm{M}_D \mathrm{M}_B  L^2\left({\left| {{Y_{{o}}}} \right|}\right)\mathbf{1}\left(\left| {{Y_{{o}}}} \right|< R_\mathrm{LoS}\right)) \right]\nonumber\\
&\hspace{-0.2cm}=\underbrace{\int_{r_o}^{R_\mathrm{LoS}} (\ln(\hbar_\mathrm{mm}\mathrm{M}_D \mathrm{M}_B)+2(\ln\beta^{\rm{mm}}_\mathrm{LoS}-\alpha^{\rm{mm}}_\mathrm{LoS}\ln r)) \widetilde{f}_{\left| {{Y}} \right|} \left( r\right) dr}_{\widetilde{\varphi}_\mathrm{mm}(\lambda_\mathrm{mm})},
\end{align}
where $\widetilde{f}_{\left| {{Y}} \right|} \left( \cdot \right)$ is given by \eqref{new_PDF_mmWave}. From \eqref{zeta_3}, we see that ${\widetilde{\varphi}_\mathrm{mm}(\lambda_\mathrm{mm})}$ increases with $\lambda_\mathrm{mm}$ because when more BSs are deployed, a mmUE  becomes closer to its associated BS, which reduces the pathloss. After some manipulations, we can obtain ${\widetilde{\varphi}_\mathrm{mm}(\lambda_\mathrm{mm})}$ as \eqref{varphi_123} and complete the proof.

\bibliographystyle{IEEEtran}
%\bibliography{mybib}

\begin{thebibliography}{10}
\providecommand{\url}[1]{#1}
\csname url@samestyle\endcsname
\providecommand{\newblock}{\relax}
\providecommand{\bibinfo}[2]{#2}
\providecommand{\BIBentrySTDinterwordspacing}{\spaceskip=0pt\relax}
\providecommand{\BIBentryALTinterwordstretchfactor}{4}
\providecommand{\BIBentryALTinterwordspacing}{\spaceskip=\fontdimen2\font plus
\BIBentryALTinterwordstretchfactor\fontdimen3\font minus
  \fontdimen4\font\relax}
\providecommand{\BIBforeignlanguage}[2]{{%
\expandafter\ifx\csname l@#1\endcsname\relax
\typeout{** WARNING: IEEEtran.bst: No hyphenation pattern has been}%
\typeout{** loaded for the language `#1'. Using the pattern for}%
\typeout{** the default language instead.}%
\else
\language=\csname l@#1\endcsname
\fi
#2}}
\providecommand{\BIBdecl}{\relax}
\BIBdecl

\bibitem{Kaibin_Huang2015}
K.~Huang and X.~Zhou, ``Cutting the last wires for mobile communications by
  microwave power transfer,'' \emph{{IEEE} Commun. Mag.}, vol.~53, no.~6, pp.
  86--93, June 2015.

\bibitem{F_Boccardi_5G}
F.~Boccardi, R.~W. {Heath Jr.}, A.~Lozano, T.~L. Marzetta, and P.~Popovski,
  ``Five disruptive technology directions for {5G},'' \emph{{IEEE} Commun.
  Mag.}, vol.~52, no.~2, pp. 74--80, Feb. 2014.

\bibitem{Jeffrey_5G}
J.~Andrews, S.~Buzzi, W.~Choi, S.~Hanly, A.~Lozano, A.~Soong, and J.~Zhang,
  ``What will {5G} be?'' \emph{{IEEE} J. Sel. Areas Commun.}, vol.~32, no.~6,
  pp. 1065--1082, June 2014.

\bibitem{dantong-survey}
D.~Liu, L.~Wang, Y.~Chen, M.~Elkashlan, K.~K. Wong, R.~Schober, and L.~Hanzo,
  ``User association in {5G} networks: {A} survey and an outlook,'' \emph{IEEE
  Commun. Surveys Tutorials}, vol.~18, no.~2, pp. 1018--1044, 2016.

\bibitem{Ge_X_2016}
X.~Ge, S.~Tu, G.~Mao, C.~X. Wang, and T.~Han, ``{5G} ultra-dense cellular
  networks,'' \emph{IEEE Wireless Commun.}, vol.~23, no.~1, pp. 72--79, Feb.
  2016.

\bibitem{Rui_Zhang_2013}
R.~Zhang and C.~K. Ho, ``{MIMO} broadcasting for simultaneous wireless
  information and power transfer,'' \emph{{IEEE} Trans. Wireless Commun.},
  vol.~12, no.~5, pp. 1989--2001, May 2013.

\bibitem{CheDZ14}
Y.~L. Che, L.~Duan, and R.~Zhang, ``Spatial throughput maximization of wireless
  powered communication networks,'' \emph{{IEEE} J. Sel. Areas Commun.},
  vol.~33, no.~8, pp. 1534--1548, Aug. 2015.

\bibitem{Flint_2015}
I.~Flint, X.~Lu, N.~Privault, D.~Niyato, and P.~Wang, ``Performance analysis of
  ambient {RF} energy harvesting with repulsive point process modeling,''
  \emph{{IEEE} Trans. Wireless Commun.}, vol.~14, no.~10, pp. 5402--5416, Oct.
  2015.

\bibitem{zhiguo_Ding_2014}
Z.~Ding, S.~M. Perlaza, I.~Esnaola, and H.~V. Poor, ``Power allocation
  strategies in energy harvesting wireless cooperative networks,'' \emph{{IEEE}
  Trans. Wireless Commun.}, vol.~13, no.~2, pp. 846--860, Feb. 2014.

\bibitem{S_A_H_WPT_2015}
A.~H. Sakr and E.~Hossain, ``Analysis of {K}-tier uplink cellular networks with
  ambient {RF} energy harvesting,'' \emph{{IEEE} J. Sel. Areas Commun.},
  vol.~33, no.~10, pp. 2226--2238, Oct. 2015.

\bibitem{Sakr_AH_2015}
{ A. H. Sakr and E. Hossain}, ``Cognitive and energy harvesting-based {D2D}
  communication in cellular networks: {S}tochastic geometry modeling and
  analysis,'' \emph{{IEEE} Trans. Commun.}, vol.~63, no.~5, pp. 1867--1880, May
  2015.

\bibitem{HH_Yang_2016}
H.~H. Yang, J.~Lee, and T.~Q.~S. Quek, ``Heterogeneous cellular network with
  energy harvesting-based {D2D} communication,'' \emph{{IEEE} Trans. Wireless
  Commun.}, vol.~15, no.~2, pp. 1406--1419, Feb. 2016.

\bibitem{Gang_yang2015}
G.~Yang, C.~K. Ho, R.~Zhang, and Y.~L. Guan, ``Throughput optimization for
  massive {MIMO} systems powered by wireless energy transfer,'' \emph{{IEEE} J.
  Sel. Areas Commun.}, vol.~33, no.~8, pp. 1640--1650, Aug. 2015.

\bibitem{Kai_kit_Wong_2015_Mag}
F.~Yuan, S.~Jin, Y.~Huang, K.~K. Wong, Q.~T. Zhang, and H.~Zhu, ``Joint
  wireless information and energy transfer in massive distributed antenna
  systems,'' \emph{{IEEE} Commun. Mag.}, vol.~53, no.~6, pp. 109--116, June
  2015.

\bibitem{Yongxuzhu2016}
Y.~Zhu, L.~Wang, K.-K. Wong, S.~Jin, and Z.~Zheng, ``Wireless power transfer in
  massive {MIMO} aided {HetNets} with user association,'' \emph{IEEE Trans.
  Commun.}, vol.~64, no.~10, pp. 4181--4195, Oct. 2016.

\bibitem{TED2013IEEE_Access}
T.~S. Rappaport, S.~Sun, R.~Mayzus, H.~Zhao, Y.~Azar, K.~Wang, G.~N. Wong,
  J.~K. Schulz, M.~Samimi, and F.~Gutierrez, ``{M}illimeter wave mobile
  communications for 5{G} cellular: {I}t will work!'' \emph{IEEE Access},
  vol.~1, pp. 335--349, May 2013.

\bibitem{T_Bai_Mag_2014}
T.~Bai, A.~Alkhateeb, and R.~W. Heath, ``Coverage and capacity of
  millimeter-wave cellular networks,'' \emph{{IEEE} Commun. Mag.}, vol.~52,
  no.~9, pp. 70--77, Sept. 2014.

\bibitem{Tianyang_arxiv2014}
T.~Bai and R.~W. {Heath Jr.}, ``Coverage and rate analysis for millimeter-wave
  cellular networks,'' \emph{{IEEE} Trans. Wireless Commun.}, vol.~14, no.~2,
  pp. 1100--1114, Feb. 2015.

\bibitem{Singh_2015}
S.~Singh, M.~N. Kulkarni, A.~Ghosh, and J.~G. Andrews, ``Tractable model for
  rate in self-backhauled millimeter wave cellular networks,'' \emph{{IEEE} J.
  Sel. Areas Commun.}, vol.~33, no.~10, pp. 2196--2211, Oct. 2015.

\bibitem{Ladan2014}
S.~Ladan, A.~B. Guntupalli, and K.~Wu, ``A high-efficiency {24 GHz} rectenna
  development towards millimeter-wave energy harvesting and wireless power
  transmission,'' \emph{IEEE Trans. Circuits and Systems}, vol.~61, no.~12, pp.
  3358--3366, Dec. 2014.

\bibitem{Nariman2016}
M.~Nariman, F.~Shirinfar, A.~P. Toda, S.~Pamarti, A.~Rofougaran, and F.~D.
  Flaviis, ``A compact {60-GHz} wireless power transfer system,'' \emph{IEEE
  Trans. Microwave Theory and Techniques}, vol.~64, no.~8, pp. 2664--2677, Aug.
  2016.

\bibitem{Lifeng_wang2015_globcom}
L.~Wang, M.~Elkashlan, R.~W. {Heath, Jr.}, M.~D. Renzo, and K.-K. Wong,
  ``Millimeter wave power transfer and information transmission,'' in
  \emph{Proc. IEEE GLOBECOM}, San Diego, CA, Dec. 2015, pp. 1--6.

\bibitem{Robert_heath_mmWave_energy}
T.~A. Khan, A.~Alkhateeb, and R.~W. {Heath, Jr.}, ``Millimeter wave energy
  harvesting,'' \emph{{IEEE} Trans. Wireless Commun.}, vol.~15, no.~9, pp.
  6048--6062, Sept. 2016.

\bibitem{Emil_2016_small_cell_massive_MIMO}
E.~Bj{\"{o}}rnson, L.~Sanguinetti, and M.~Kountouris, ``Deploying dense
  networks for maximal energy efficiency: {S}mall cells meet massive {MIMO},''
  \emph{{IEEE} J. Sel. Areas Commun.}, vol.~34, no.~4, pp. 832--847, Apr. 2016.

\bibitem{J_Park_2016}
J.~Park, S.~L. Kim, and J.~Zander, ``Tractable resource management with uplink
  decoupled millimeter-wave overlay in ultra-dense cellular networks,''
  \emph{{IEEE} Trans. Wireless Commun.}, vol.~15, no.~6, pp. 4362--4379, June
  2016.

\bibitem{ngo2013energy}
H.~Q. Ngo, E.~G. Larsson, and T.~L. Marzetta, ``Energy and spectral efficiency
  of very large multiuser {MIMO} systems,'' \emph{{IEEE} Trans. Commun.},
  vol.~61, no.~4, pp. 1436--1449, Apr. 2013.

\bibitem{Ayach2014}
O.~El~Ayach, S.~Rajagopal, S.~Abu-Surra, Z.~Pi, and R.~W. {Heath Jr.},
  ``Spatially sparse precoding in millimeter wave {MIMO} systems,''
  \emph{{IEEE} Trans. Wireless Commun.}, vol.~13, no.~3, pp. 1499--1513, Mar.
  2014.

\bibitem{S_Rangan_2014}
S.~Rangan, T.~S. Rappaport, and E.~Erkip, ``Millimeter-wave cellular wireless
  networks: Potentials and challenges,'' \emph{IEEE Proc.}, vol. 102, no.~3,
  pp. 366--385, 2014.

\bibitem{WeiFeng_2013_JSAC}
W.~Feng, Y.~Wang, N.~Ge, J.~Lu, and J.~Zhang, ``Virtual {MIMO} in multi-cell
  distributed antenna systems: {C}oordinated transmissions with large-scale
  {CSIT},'' \emph{{IEEE} J. Sel. Areas Commun.}, vol.~31, no.~10, pp.
  2067--2081, Oct. 2013.

\bibitem{A_M_Hunter_2008}
A.~M. Hunter, J.~G. Andrews, and S.~Weber, ``Transmission capacity of ad hoc
  networks with spatial diversity,'' \emph{{IEEE} Trans. Wireless Commun.},
  vol.~7, no.~12, pp. 5058--5071, Dec. 2008.

\bibitem{X_Lu_Survey_2015}
X.~Lu, P.~Wang, D.~Niyato, D.~I. Kim, and Z.~Han, ``Wireless networks with {RF}
  energy harvesting: {A} contemporary survey,'' \emph{IEEE Commun. Surveys {\&}
  Tutorials}, vol.~17, no.~2, pp. 757--789, 2015.

\bibitem{Kaibin2014}
K.~Huang and V.~K.~N. Lau, ``Enabling wireless power transfer in cellular
  networks: {A}rchitecture, modeling and deployment,'' \emph{{IEEE} Trans.
  Wireless Commun.}, vol.~13, no.~2, pp. 902--912, Feb. 2014.

\bibitem{Lifeng_Massive_MIMO}
L.~Wang, H.~Q. Ngo, M.~Elkashlan, T.~Q. Duong, and K.~K. Wong, ``Massive {MIMO}
  in spectrum sharing networks: Achievable rate and power efficiency,''
  \emph{IEEE Systems Journal}, vol.~11, no.~1, pp. 20--31, Mar. 2017.

\bibitem{Baccelli2006_gen}
F.~Baccelli, B.~Blaszczyszyn, and P.~Muhlethaler, ``An {Aloha} protocol for
  multihop mobile wireless networks,'' \emph{{IEEE} Trans. Inf. Theory},
  vol.~52, no.~2, pp. 421--436, Feb. 2006.

\bibitem{gradshteyn}
I.~S. Gradshteyn and I.~M. Ryzhik, \emph{Table of Integrals, Series and
  Products}, 7th~ed.\hskip 1em plus 0.5em minus 0.4em\relax San Diego, C.A.:
  Academic Press, 2007.

\bibitem{Baccelli2009}
F.~Baccelli and B.~Blaszczyszyn, \emph{Stochastic Geometry and Wireless
  Networks, Volume I: {T}heory}.\hskip 1em plus 0.5em minus 0.4em\relax Now
  Publishers Inc. Hanover, MA, USA, 2009.

\bibitem{deng201528}
S.~Deng, M.~K. Samimi, and T.~S. Rappaport, ``{28 GHz and 73 GHz}
  millimeter-wave indoor propagation measurements and path loss models,'' in
  \emph{Prof. IEEE Int. Conf.on Commun. Workshop (ICCW)}, 2015, pp. 1244--1250.

\bibitem{rappaport201238}
T.~S. Rappaport, E.~Ben-Dor, J.~N. Murdock, and Y.~Qiao, ``38 {GHz} and 60
  {GHz} angle-dependent propagation for cellular \& peer-to-peer wireless
  communications,'' in \emph{Proc. IEEE Int. Conf. Commun. (ICC)}, 2012, pp.
  4568--4573.

\bibitem{Hamdi_2008}
K.~A. Hamdi, ``Capacity of {MRC} on correlated rician fading channels,''
  \emph{{IEEE} Trans. Commun.}, vol.~56, no.~5, pp. 708--711, May 2008.

\bibitem{XinqinLin_2014}
X.~Lin, J.~G. Andrews, and A.~Ghosh, ``Spectrum sharing for device-to-device
  communication in cellular networks,'' \emph{{IEEE} Trans. Wireless Commun.},
  vol.~13, no.~12, pp. 6727--6740, Dec. 2014.

\bibitem{N_Lee_2016}
N.~Lee, F.~Baccelli, and R.~W. {Heath Jr.}, ``Spectral efficiency scaling laws
  in dense random wireless networks with multiple receive antennas,''
  \emph{{IEEE} Trans. Inf. Theory}, vol.~62, no.~3, pp. 1344--1359, Mar. 2016.

\end{thebibliography}
% Generated by IEEEtran.bst, version: 1.13 (2008/09/30)

\end{document}